\documentclass[floats,12pt,onecolumn]{article}
\usepackage{amsfonts}
\usepackage{amsmath}
\usepackage{calc}
\usepackage{subfigure}
\usepackage{graphicx} 
\usepackage{enumerate}
\usepackage{amssymb}
\usepackage{amsthm}
\newcommand{\appref}[1]{\hyperref[#1]{{Appendix~\ref*{#1}}}}
\newcommand{\be}{\begin{eqnarray} \begin{aligned}}
\newcommand{\ee}{\end{aligned} \end{eqnarray} }
\newcommand{\benn}{\begin{eqnarray*} \begin{aligned}}
\newcommand{\eenn}{\end{aligned} \end{eqnarray*}}
\newcommand{\cancel}[1]{} 
\topmargin=-1.7cm
\oddsidemargin=0cm
\evensidemargin=0cm 
\textwidth=17cm
\textheight=23.4cm

\newcommand*{\textfrac}[2]{{{#1}/{#2}}}

\newcommand*{\cB}{\mathcal{B}}

\newcommand*{\cE}{\mathcal{E}}

\newcommand*{\cH}{\mathcal{H}}

\newcommand*{\cL}{\mathcal{L}}
\newcommand*{\cM}{\mathcal{M}}

\newcommand*{\cR}{\mathcal{R}}

\newcommand*{\cW}{\mathcal{W}}

\newcommand*{\tr}{\mathop{\mathrm{tr}}\nolimits}

\newcommand*{\myspan}{\mathrm{span}}

\newcommand{\bc}{\begin{center}}
\newcommand{\ec}{\end{center}}


\newtheorem{theorem}{Theorem}[section]
\newtheorem{lemma}[theorem]{Lemma}

\newtheorem{corollary}[theorem]{Corollary}


\usepackage{amsfonts}

\def\01{\{0,1\}}

\newcommand{\ket}[1]{|#1\rangle}
\newcommand{\bra}[1]{\langle#1|}

\newcommand{\proj}[1]{|#1\rangle\langle#1|}


\newcommand{\comment}[1]{}
\newcommand{\diag}{\mbox{diag}}

\begin{document}
\newcommand*{\spec}{\mathsf{spec}}
\newcommand*{\sspec}{\mathsf{Sspec}}
\newcommand*{\photonnumber}{{\bf N}}
\newcommand*{\Mat}{\mathsf{Mat}}
\title{The conditional entropy power inequality\\
for Gaussian quantum states} 

\author{Robert Koenig}
\date{\today}
\maketitle
\begin{abstract}
We propose a generalization of the quantum entropy power inequality involving conditional entropies.  For the special case of Gaussian states, we give a proof based on  perturbation theory for symplectic spectra. We discuss some implications for entanglement-assisted classical communication over  additive bosonic noise channels. 
\end{abstract}

\section{Classical and quantum entropy-power inequalities}
The entropy power inequality, proposed by Shannon~\cite{Shannon48} and later established with increasing rigor by Stam~\cite{Stam59} and Blachman~\cite{Blachman65},
has become a fundamental tool in classical information theory. Shannon's original application of the entropy power inequality is  a lower bound on the capacity of an additive (but potentially non-Gaussian) noise channel~\cite[Theorem 18]{Shannon48}. However, the usefulness of the entropy power inequality is especially  evident in  multi-terminal information theory. Among the most well-known applications are the characterization of the    Gaussian broadcast channel by Bergman~\cite{Bergmans}, 
 the Gaussian two-description problem by Ozarow~\cite{Ozarow} and the quadratic Gaussian CEO problem by Oohama~\cite{Oohama}. In these multi-user settings, Fano's inequality by itself is insufficient to characterize different tradeoffs. A more recent application proposed by Liu and Viswanath~\cite{liuvish} uses entropy power inequalities to solve certain optimization problems.

The entropy power inequality lower bounds the differential entropy  of the con\-volution  of two  independent random variables~$X$,~$Y$ taking values in~$\mathbb{R}^n$. Its  covariance-preserving version states that
\begin{align}
H(\sqrt{\lambda} X+\sqrt{1-\lambda}Y)\geq \lambda H(X)+(1-\lambda) H(Y)\qquad\textrm{ for all }0\leq \lambda\leq 1\ .\label{eq:covariancepreservingentropypower}
\end{align}
Eq.~\eqref{eq:covariancepreservingentropypower} can be shown~\cite{Lieb78,VerduGuo06} to be equivalent to
the more commonly used statement 
\begin{align*}
e^{2H(X+Y)/n}\geq e^{2H(X)/n}+e^{2H(Y)/n}\ .
\end{align*}  The latter explains the terminology as $e^{2H(X)/n}$ is the power, i.e., variance of a Gaussian random variable with identical entropy as~$H(X)$. Inequalities such as~\eqref{eq:covariancepreservingentropypower} are closely related to Log-Sobolev inequalities (see e.g.,~\cite{Toscani}) 
as well as Brunn-Minkowski-type inequalities~\cite{CostaThomas84}.  Generalizations to free probability~\cite{noncommutativeprobabilitytheory} and quantum states have been considered.

\begin{center}
\begin{figure}
\begin{center}
\includegraphics{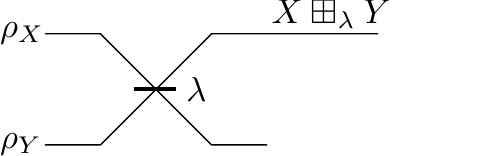}
\caption{\label{fig:epi}The setting of the (unconditional) quantum entropy power inequality: two independent sets of modes (i.e., a product state~$\rho_X\otimes\rho_Y$) combine at a beam-splitter of transmissivity~$\lambda\in [0,1]$.  }
\end{center}
\end{figure}
\end{center}

Among known quantum generalizations is the photon-number-inequality conjectured in~\cite{Guhaetal07} (and proved for special cases such as Gaussian states in~\cite{Guhathesis}), and a version of~\eqref{eq:covariancepreservingentropypower} involving von Neumann entropies $S(\rho)=-\tr(\rho\log\rho)$ instead of differential (Shannon) entropies~\cite{KoeGrae}. Formally, this statement is obtained
by substituting states~$\rho_X,\rho_Y$ of~$n$ bosonic modes for
the random variables $(X,Y)$, and using a beamsplitter to process the product state~$\rho_X\otimes\rho_Y$, see Fig.~\ref{fig:epi}. 
 This results in an output denoted~$\rho_{X\boxplus_\lambda Y}$ on one of the output arms (see below for a precise definition), and the corresponding quantum generalization states that
\begin{align}
S(X\boxplus_\lambda Y)\geq \lambda S(X)+(1-\lambda)S(Y)\qquad\textrm{ for all }0\leq \lambda\leq 1\ .\label{eq:entropypowerquantumunconditional}
\end{align}
As discussed in~\cite{KoeGraeNat}, Eq.~\eqref{eq:entropypowerquantumunconditional} provides strong upper limits on classical communication over additive thermal noise channels. Indeed, for suitable parameter regimes, the resulting upper bounds  on the classical capacity are close to the Holevo-Schumacher-Westmoreland lower bound  achievable by coherent states. This limits the degree of  potential additivity violations, implying that coding strategies using simple product states are close to optimal for thermal noise channels.

\section{Conditional entropy-power inequalities}
Most applications of the classical entropy power inequality make use of a version for conditional entropies $H(X|Z)=H(XZ)-H(Z)$. It is easy to see that, if $(X,Y)$ are conditionally independent given~$Z$, then
\begin{align}
H(\sqrt{\lambda} X+\sqrt{1-\lambda}Y|Z)\geq \lambda H(X|Z)+(1-\lambda) H(Y|Z)\qquad\textrm{ for all }0\leq \lambda\leq 1\ .\label{eq:conditionalentropypowerinequalityclassical}
\end{align}
Indeed,~\eqref{eq:conditionalentropypowerinequalityclassical} is an immediate consequence of~\eqref{eq:covariancepreservingentropypower} and the fact that
the conditional entropy~$H(X|Z)=\sum_z P_Z(z) H(X|Z=z)$ is the average of the entropies~$H(X|Z=z)$ of the conditonal distributions~$\{P_{X|Z=z}\}_z$.

It is natural to ask whether a generalization of~\eqref{eq:conditionalentropypowerinequalityclassical} is true for states~$\rho_{XYZ}$ for which the $n$-mode systems $X$ and $Y$ are conditionally independent given the quantum system~$Z$, i.e., whether
\begin{align}
S(X\boxplus_\lambda Y|Z)\geq \lambda S(X|Z)+(1-\lambda)S(Y|Z)\ .\label{eq:qconditionalentropyinequality}
\end{align}
We will argue that~\eqref{eq:qconditionalentropyinequality} is useful to estimate entanglement-assisted capacities of additive noise channels. The inequality may have additional applications in multi-user quantum information theory.

Establishing an inequality of the form~\eqref{eq:qconditionalentropyinequality} appears to be non-trivial because we cannot simply condition on the quantum system~$Z$ (unless, of course, it is purely classical). However, the following simplification is immediate: it suffices to establishes 
an inequality of the form
\begin{align}
S(X\boxplus_\lambda Y|Z_1Z_2)\geq \lambda S(X|Z_1)+(1-\lambda)S(Y|Z_2)\
\textrm{ for all product states } \rho_{XZ_1}\otimes\rho_{YZ_2}\ .
 \label{eq:conditionalentropypowerinequalitymain}
\end{align}
 This is because any conditionally independent state~$\rho_{XYZ}$ has the Markov form~\cite{HayJoPeWi04}
\begin{align*}
\rho_{XYZ}&=\bigoplus_j p_j \rho_{XZ^{(j)}_1}\otimes\rho_{YZ^{(j)}_2}\ 
\end{align*}
and the von Neumann entropy satisfies~$S(\bigoplus_j p_j \rho_j)=\sum_j p_j S(\rho_j)+H(p)$ on direct sums, where $H(p)$ is the  Shannon entropy of the distribution~$\{p_j\}_j$.

Here we prove inequality~\eqref{eq:conditionalentropypowerinequalitymain} for all pairs of Gaussian states~$\rho_{XZ_1}$ and~$\rho_{YZ_2}$. We conjecture that~Eq.~\eqref{eq:qconditionalentropyinequality} holds in general for arbitrary conditionally independent states~$\rho_{X_1X_2Z}$. If this is the case, the implications for entanglement-assisted capacities discussed here extend to all additive (but not necesssarily Gaussian) channels. 

It may be possible to find a proof  of~Eq.~\eqref{eq:qconditionalentropyinequality} using a similar strategy as we employ in the Gaussian case. However, this will require a novel analysis of the evolution of conditional entropies under a certain Markovian evolution. In the Gaussian case, this is based on perturbation theory for symplectic eigenvalues, as we explain below.

\section{Implications for entanglement-assisted  communication}
Quantum communication channels are characterized by different capacities depending on the additional auxiliary resources available, as well as whether or not the communicated information is classical or quantum. Here we focus on entanglement-assisted classical capacities which are arguably best understood. Consider a point-to-point scenario where a sender~$A$ tries to communicate to a receiver~$C$ over a channel~$\cE:\cB(A)\rightarrow\cB(C)$.  The entanglement-assisted classical capacity~$C_E(\cE)$ is defined as the maximal rate (in bits/channel use) at which classical bits can be transmitted reliably if the sender and receiver share an unlimited amount of prior entanglement.

In sharp contrast to the unassisted classical~\cite{Hastings} or the quantum capacity~\cite{SmithYard08}, the  quantity~$C_E(\cE)$ is additive~\cite{adamicerf}.  In particular, it has the single-letter expression~$C_E(\cE)=\sup_\rho I(\cE,\rho)$
in terms of the quantum mutual information
\begin{align*}
I(\cE,\rho)=S(\rho)+S(\cE(\rho))-S((\cE\otimes I_{A'})(\Psi_{AA'}))=:I(A':C)_{(\cE\otimes I_{A'})(\Psi_{AA'})}\ ,
\end{align*} where~$\Psi_{AA'}$ is a purification of the input density operator~$\rho$. (This statement is a generalization~\cite{Bennettetal02,Holevoentanglementassisted} of the Holevo-Schumacher-Westmoreland theorem.)   The quantity~$I(\cE,\rho)$ has a number of nice properties: it is positive and  concave with respect to the input state~$\rho$.

 For channels involving infinite-dimensional state spaces, such capacity results can be
adapted by certain limiting procedures -- see~\cite{Holevo03} for a rigorous analysis of the entanglement-assisted capacity. It is necessary to constrain the inputs to obtain meaningful results: one is led to consider the quantity~$C_E(\cE,\photonnumber)$ obtained by maximizing over all input states~$\rho$ with mean photon number  upper bounded by~$\photonnumber$, i.e., 
\begin{align}
C_E(\cE,\photonnumber)=
\sup_{\rho: \langle a^\dagger a\rangle_\rho\leq \photonnumber} I(\cE,\rho)\ .\label{eq:generalizationoverallinputstates}
\end{align}
\begin{center}
\begin{figure}
\begin{center}
\includegraphics{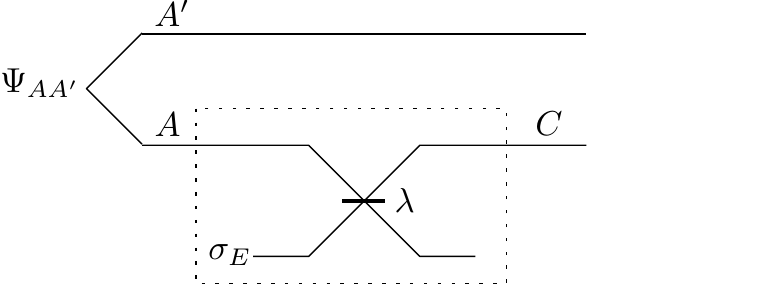}
\caption{\label{fig:pointtopoint}The additive noise channel~$\cE_{\lambda,\sigma_E}$  with a transmissivity~$\lambda$-beamsplitter and environment in the state~$\sigma_E$ is schematically illustrated by the dotted box. Its entanglement-assisted capacity~$C_E(\cE_{\lambda,\sigma_E},\photonnumber)$ is obtained by  maximizing the mutual information $I(A':C)$ over all input states~$\rho_{A}$ (with purification~$\Psi_{AA'}$) subject to the mean photon number constraint~$\tr(a^\dagger a\rho)\leq \photonnumber$. }
\end{center}
\end{figure}
\end{center}
 
We are interested in estimating~\eqref{eq:generalizationoverallinputstates} when~$\cE=\cE_{\lambda,\sigma_E}$ is an additive noise channel. Such a channel is characterized by the transmissivity~$\lambda\in [0,1]$  and a state~$\sigma_E$ of the environment, see Figure~\ref{fig:pointtopoint}. Both the input as well as the environment of the channel $\cE_{\lambda,\sigma_E}$ consist of $n$~bosonic modes; the two interact with a beamsplitter of transmissivity~$\lambda$. The output of the channel is a set of $n$~modes (see Section~\ref{sec:beamsplittersproductstates} for a precise definition of these expressions).  We will focus on $n=1$ (this often being sufficient because of additivity), although the entropy power inequality also applies for~$n>1$. 

The entanglement-assisted capacity of~$\cE_{\lambda,\sigma_E}$ can easily be upper bounded by twice the maximum output entropy, i.e., \footnote{The proof of~\eqref{eq:naiveupperboundentanglementassisted} follows immediately from (see Figure~\ref{fig:pointtopoint})
\begin{align*}
I(A':C)=S(C)-S(C|A')=S(C)+S(C|E'D)\leq 2S(C)\ .
\end{align*}
Here $E'$ purifies the environment, and $D$ is the second beam-splitter output. 
The second identity uses the purity of the overall state on $A'CE'D$ and the inequality is subadditivity of the entropy.}
\begin{align}
C_E(\cE_{\lambda,\sigma_E},\photonnumber)\leq 2\max_{\rho: \langle a^\dagger a\rangle_{\rho}\leq \photonnumber} S(C)\leq 2g(\photonnumber_{\max})\ .\label{eq:naiveupperboundentanglementassisted}
\end{align}
Here the second inequality uses the fact~\cite{Wolfetal06} that Gaussian states maximize entropy under a given photon number constraint:  $g(x)=(x+1)\log(x+1)-x\log x$ is the entropy (in nats) of a Gaussian state with mean photon number~$x$, $\photonnumber_{\max}=\lambda \photonnumber+(1-\lambda) \photonnumber_E$ is the maximal mean photon number at the output, and $\photonnumber_E=\langle a^\dagger a\rangle_{\sigma_E}$ is the mean photon number of the environment. In the limit~$\lambda\rightarrow 1$ of perfect transmission, the rhs.~of~\eqref{eq:naiveupperboundentanglementassisted}  becomes twice the entropy of the input as achievable by dense coding~\cite{densecoding}.
\noindent The following Corollary to our conditional entropy power inequality (Theorem~\ref{thm:entropypowerinequalitygaussian} below) improves on the upper bound~\eqref{eq:naiveupperboundentanglementassisted}.

\begin{corollary}\label{cor:entanglementassistedcapacities} Let $\cE_{\lambda,\sigma_E}$ be the additive noise channel with transmissivity~$\lambda$ and environment in a state~$\sigma_E$ with mean photon number~$\photonnumber_E$. If~$\sigma_E$ is Gaussian, then the entanglement-assisted capacity satisfies
\begin{align}
C_E(\cE_{\lambda,\sigma_E},\photonnumber)\leq g(\photonnumber_{\max})+\lambda g(\photonnumber)-(1-\lambda)S(E)\ ,\label{eq:boundsentanglementassistedcapacities}
\end{align}
where $\photonnumber_{\max}=\lambda \photonnumber+(1-\lambda) \photonnumber_E$. Moreover, if conjecture~\eqref{eq:qconditionalentropyinequality} holds for all states, then the bound~\eqref{eq:boundsentanglementassistedcapacities} holds even in the case where~$\sigma_E$ is not Gaussian. 
\end{corollary}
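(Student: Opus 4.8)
The plan is to bound the entanglement-assisted mutual information $I(A':C)$ for a fixed input state $\rho=\rho_A$, and then take the supremum over admissible $\rho$. I would use the setup of the footnote to~\eqref{eq:naiveupperboundentanglementassisted}: purify $\rho$ to $\Psi_{AA'}$, purify the environment to a pure state $\sigma_{EE'}$ (which can be chosen Gaussian whenever $\sigma_E$ is), and let the transmissivity-$\lambda$ beamsplitter act on $A$ and $E$, producing the channel output $C$ on one arm and the second beamsplitter output $D$ on the other. The overall state on $A'CDE'$ is then pure, and using this purity together with $S(A')=S(\rho)$ and $S(A'C)=S(DE')$ one gets
\begin{align*}
I(A':C)=S(C)-S(C|A')=S(C)+S(\rho)-S(DE')\ .
\end{align*}
Now $S(C)\le g(\photonnumber_{\max})$ because the mean output photon number is at most $\photonnumber_{\max}$ and Gaussian states maximize entropy under a photon-number constraint~\cite{Wolfetal06}, and $S(\rho)\le g(\photonnumber)$ for the same reason. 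So the whole problem is reduced to a \emph{lower} bound on $S(DE')$.

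This is where the conditional entropy power inequality (Theorem~\ref{thm:entropypowerinequalitygaussian}) enters. The key observation is that the second arm $D$ coincides, up to a phase-space local unitary on the $A$-mode — which leaves all entropies unchanged — with $A\boxplus_{1-\lambda}E$. I would therefore write $S(DE')=S(D|E')+S(E')=S(D|E')+S(E)$ (the last equality because $\sigma_{EE'}$ is pure) and apply~\eqref{eq:conditionalentropypowerinequalitymain} to the product state $\rho_A\otimes\sigma_{EE'}$, with $X=A$ carrying no side information and $Y=E$ conditioned on $Z_2=E'$. This gives
\begin{align*}
S(D|E')\geq (1-\lambda)\,S(\rho)+\lambda\,S(E|E')=(1-\lambda)\,S(\rho)-\lambda\,S(E)\ ,
\end{align*}
using $S(E|E')=-S(E)$, again by purity of $\sigma_{EE'}$. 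Hence $S(DE')\geq(1-\lambda)\big(S(\rho)+S(E)\big)$, and substituting back,
\begin{align*}
I(A':C)\leq g(\photonnumber_{\max})+S(\rho)-(1-\lambda)\big(S(\rho)+S(E)\big)=g(\photonnumber_{\max})+\lambda\,S(\rho)-(1-\lambda)\,S(E)\leq g(\photonnumber_{\max})+\lambda\,g(\photonnumber)-(1-\lambda)\,S(E)\ ,
\end{align*}
which is~\eqref{eq:boundsentanglementassistedcapacities} after taking the supremum.

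Two points of principle remain, and they correspond to the two assertions of the corollary. Theorem~\ref{thm:entropypowerinequalitygaussian} requires both factors of $\rho_A\otimes\sigma_{EE'}$ to be Gaussian; Gaussianity of $\sigma_{EE'}$ is exactly the hypothesis that $\sigma_E$ is Gaussian, but the argument above establishes~\eqref{eq:boundsentanglementassistedcapacities} only for Gaussian inputs $\rho$. To reach the capacity $C_E(\cE_{\lambda,\sigma_E},\photonnumber)$, which is a supremum over all inputs, I would appeal to the fact that the entanglement-assisted capacity of a Gaussian channel under a photon-number constraint is attained on Gaussian input states. For the second assertion one instead replaces Theorem~\ref{thm:entropypowerinequalitygaussian} by the conjectured inequality~\eqref{eq:qconditionalentropyinequality} — equivalently~\eqref{eq:conditionalentropypowerinequalitymain} for \emph{all} product states $\rho_{XZ_1}\otimes\rho_{YZ_2}$ — which carries no Gaussianity assumption, so that the same chain of inequalities applies for an arbitrary environment $\sigma_E$ and an arbitrary input.

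Granting the conditional entropy power inequality, the main obstacle is to see \emph{how} it must be applied: to the quantity $S(DE')$, through the splitting $S(DE')=S(D|E')+S(E)$ and the identification $D\cong A\boxplus_{1-\lambda}E$, rather than to the channel output $S(C)=S(A\boxplus_\lambda E)$, where it would point the wrong way and be of no use. Everything else — fixing beamsplitter conventions so that the second arm is $A\boxplus_{1-\lambda}E$ up to a local Gaussian unitary, the purity identities $S(A'C)=S(DE')$ and $S(E|E')=-S(E)$, and the reduction to Gaussian inputs for the first assertion — is routine.
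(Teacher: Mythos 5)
Your argument is correct, but it routes the conditional entropy power inequality through the complementary beamsplitter arm, whereas the paper applies it directly at the channel output. The paper's proof writes $I(A':C)=S(C)-S(C|A')$ and lower-bounds the subtracted term by applying Theorem~\ref{thm:entropypowerinequalitygaussian} to $C=A\boxplus_\lambda E$ with side information $A'$ on the input branch and none on the environment, giving $S(C|A')\geq\lambda S(A|A')+(1-\lambda)S(E)=-\lambda S(A)+(1-\lambda)S(E)$; it then invokes the restriction of the optimization to Gaussian inputs~\cite{HW01} and the maximum entropy principle, exactly as you do. You instead purify the environment, use purity of the global state on $A'CDE'$ to trade $S(C|A')$ for $S(DE')-S(A)$, and apply the CEPI to $D\cong A\boxplus_{1-\lambda}E$ conditioned on $E'$; since $S(E|E')=-S(E)$, your bound $S(DE')\geq(1-\lambda)\bigl(S(A)+S(E)\bigr)$ is arithmetically identical to the paper's bound on $S(C|A')$, as it must be by purity. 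So the two derivations are equivalent in content: yours costs an extra purification and the identification of the second arm (which, with the convention~\eqref{eq:Slambdadef}, differs from $A\boxplus_{1-\lambda}E$ by a phase-space reflection on the $E$ mode rather than the $A$ mode --- harmless, since local unitaries leave all the entropies involved unchanged), while the paper's is a one-line application at the output with no reference to $D$ or $E'$. Your closing remark that an application at the output arm ``points the wrong way'' is accurate only for the unconditional quantity $S(C)$; the conditional application $S(C|A')\geq\lambda S(A|A')+(1-\lambda)S(E)$ is the direct route and is what the paper uses. Your handling of the two assertions --- Gaussianity of the input via the Holevo--Werner reduction for the Gaussian-environment case, and substitution of conjecture~\eqref{eq:qconditionalentropyinequality} for the non-Gaussian case --- matches the paper.
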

We emphasize that  Corollary~\ref{cor:entanglementassistedcapacities} is of interest mainly in cases where the channel is not completely characterized as e.g., in quantum cryptography.  Under conjecture~\eqref{eq:qconditionalentropyinequality}, it gives a universal upper bound independent of the detailed structure of the environment's state~$\sigma_E$.

\begin{proof}
Assume that~$\sigma_E$ is Gaussian such that~$\cE_{\lambda,\sigma_E}$
is a Gaussian operation. Then the optimization~\eqref{eq:generalizationoverallinputstates} can be restricted to Gaussian states~\cite{HW01}. Consider an arbitrary Gaussian~$\rho_A$, and let~$\Psi_{AA'}$ be a Gaussian purification. Then
\begin{align*}
I(A':C)=S(C)-S(C|A')\leq S(C)-(\lambda  S(A|A')+(1-\lambda) S(E))
\end{align*}
by the conditional entropy power inequality for Gaussian states. The claim then follows from the maximum entropy principle~\cite{Wolfetal06} (i.e., the fact that Gaussian states maximize entropy under a constraint on the second moments) because $S(A|A')=-S(A)$ for a pure state~$\Psi_{AA'}$.

The case of a non-Gaussian state~$\sigma_E$ follows in a similar manner (under conjecture~\eqref{eq:qconditionalentropyinequality}). Here we cannot  restrict the optimization~\eqref{eq:generalizationoverallinputstates} to Gaussian states.
\end{proof}

Figure~\ref{fig:squeezedstateexample} shows a comparison of this bound with the known capacity of the thermal noise channel.

\begin{figure}[ht!]
     \begin{center}
        \subfigure[{\protect $\photonnumber_E=1/2$, $\photonnumber=10$ and $\lambda\in [0,1]$}]{ 
            \label{fig:first}
            \includegraphics[width=0.5\textwidth]{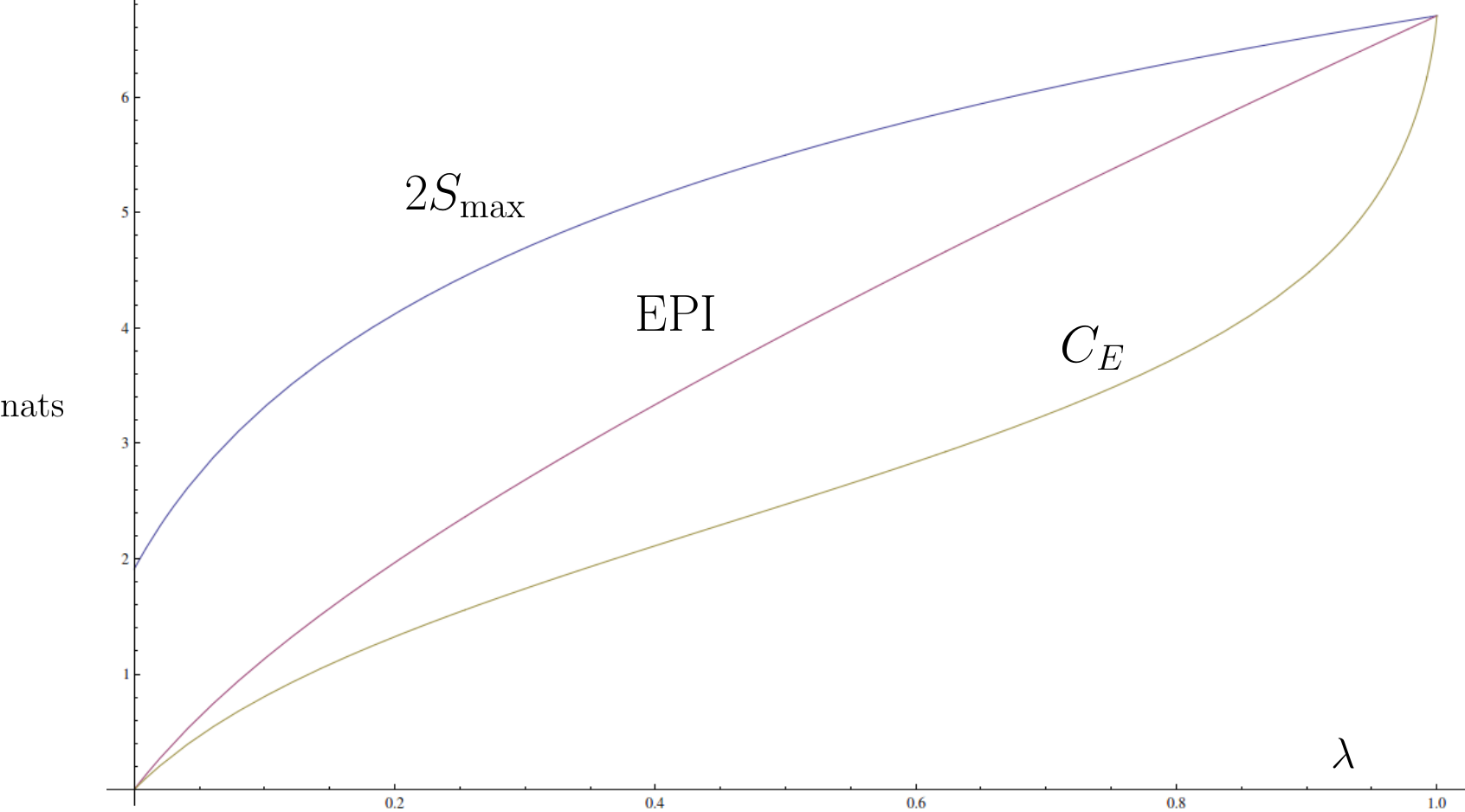}}%
        \subfigure[{\protect $\photonnumber_E=1/2$, $\lambda=1/4$ and  $\photonnumber\in [0,4]$}]{%
            \label{fig:fourth}
            \includegraphics[width=0.5\textwidth]{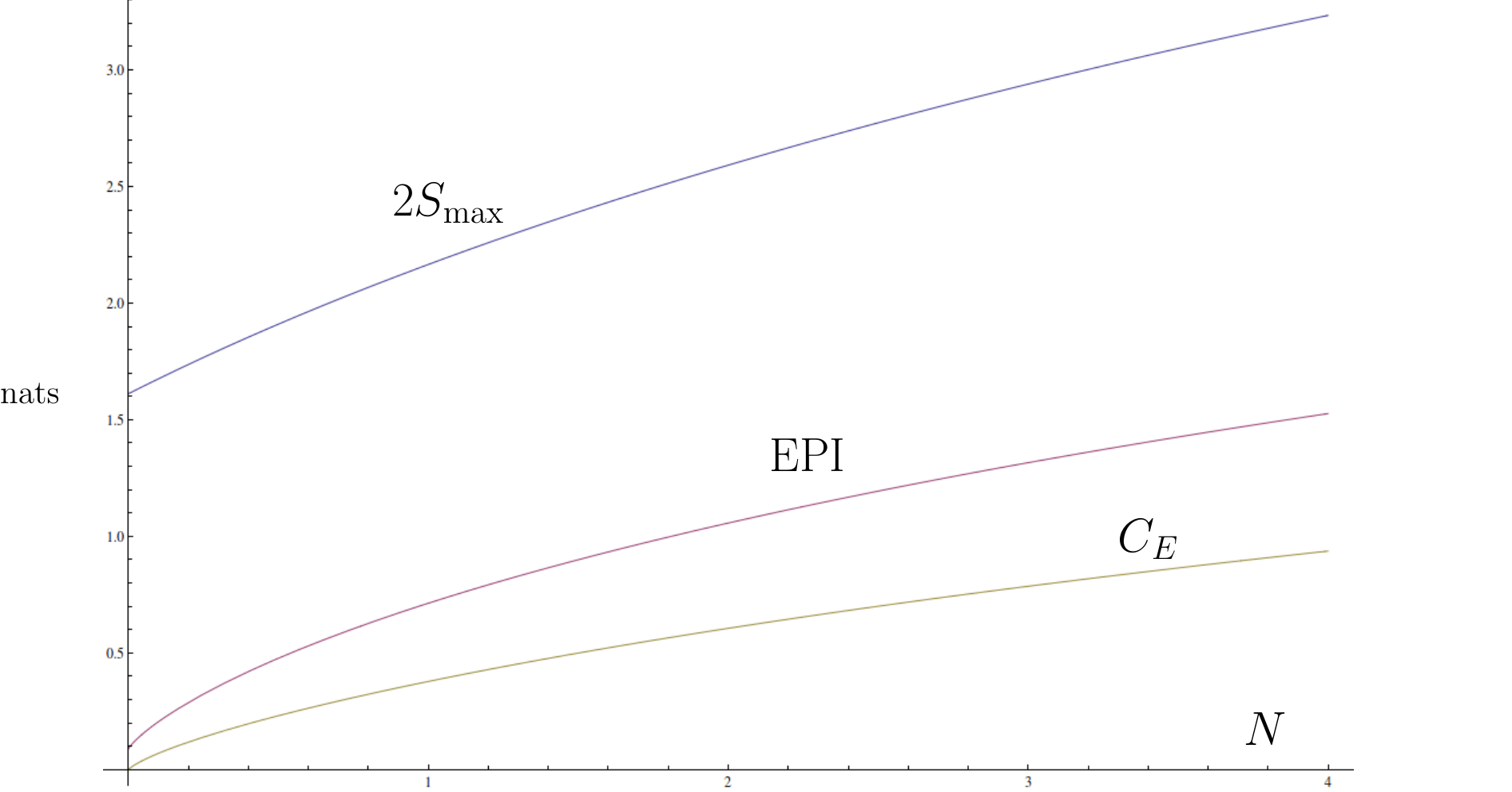}
        }\\ 

        \subfigure[{\protect $\photonnumber_E=0.005$, $\photonnumber=10$ and~$\lambda\in [0,1]$}]{%
            \label{fig:third}
            \includegraphics[width=0.5\textwidth]{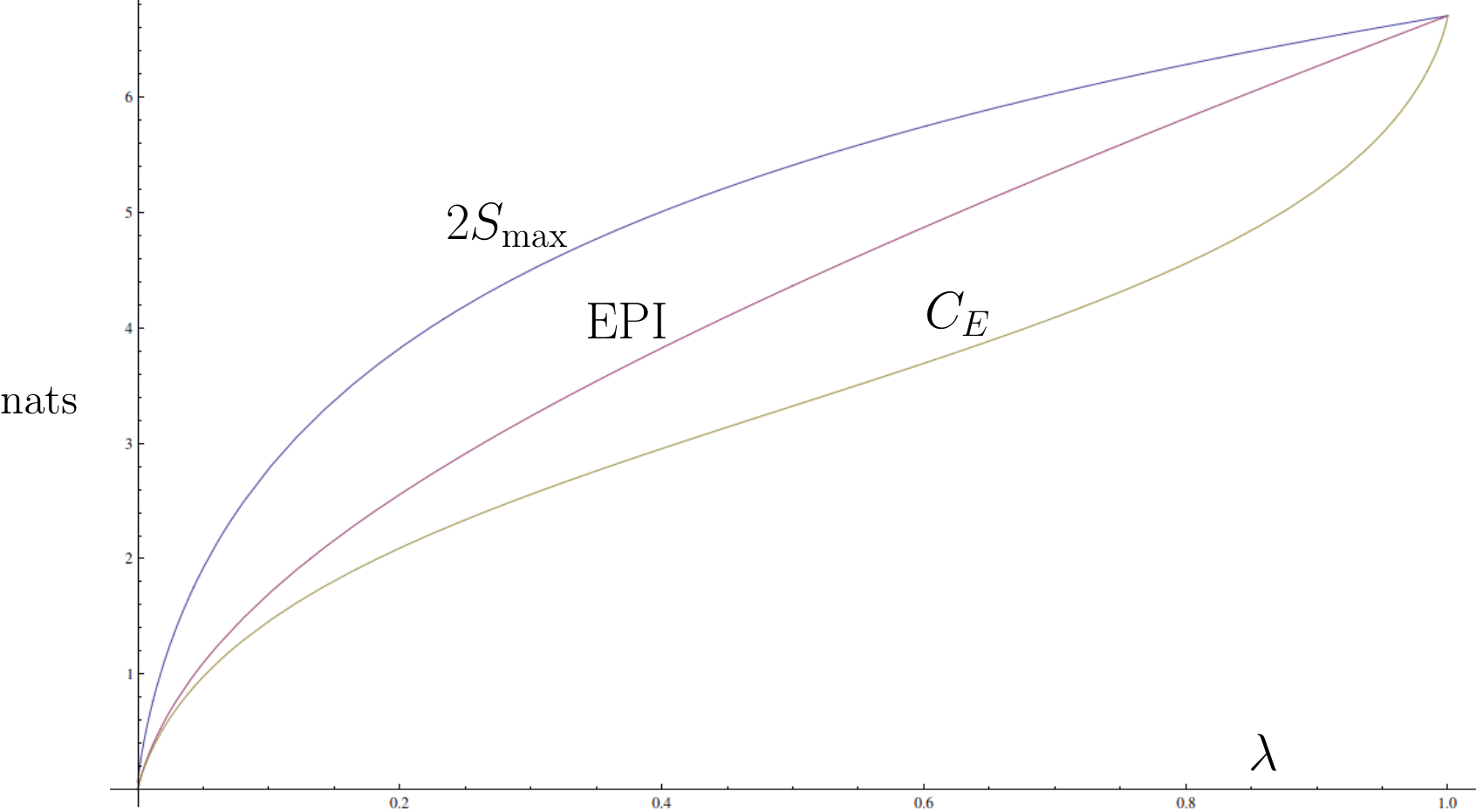}
        }%
        \subfigure[{\protect $\photonnumber_E=0.005$, $\lambda=1/4$ and  $\photonnumber\in [0,4]$}]{%
           \label{fig:second}
           \includegraphics[width=0.5\textwidth]{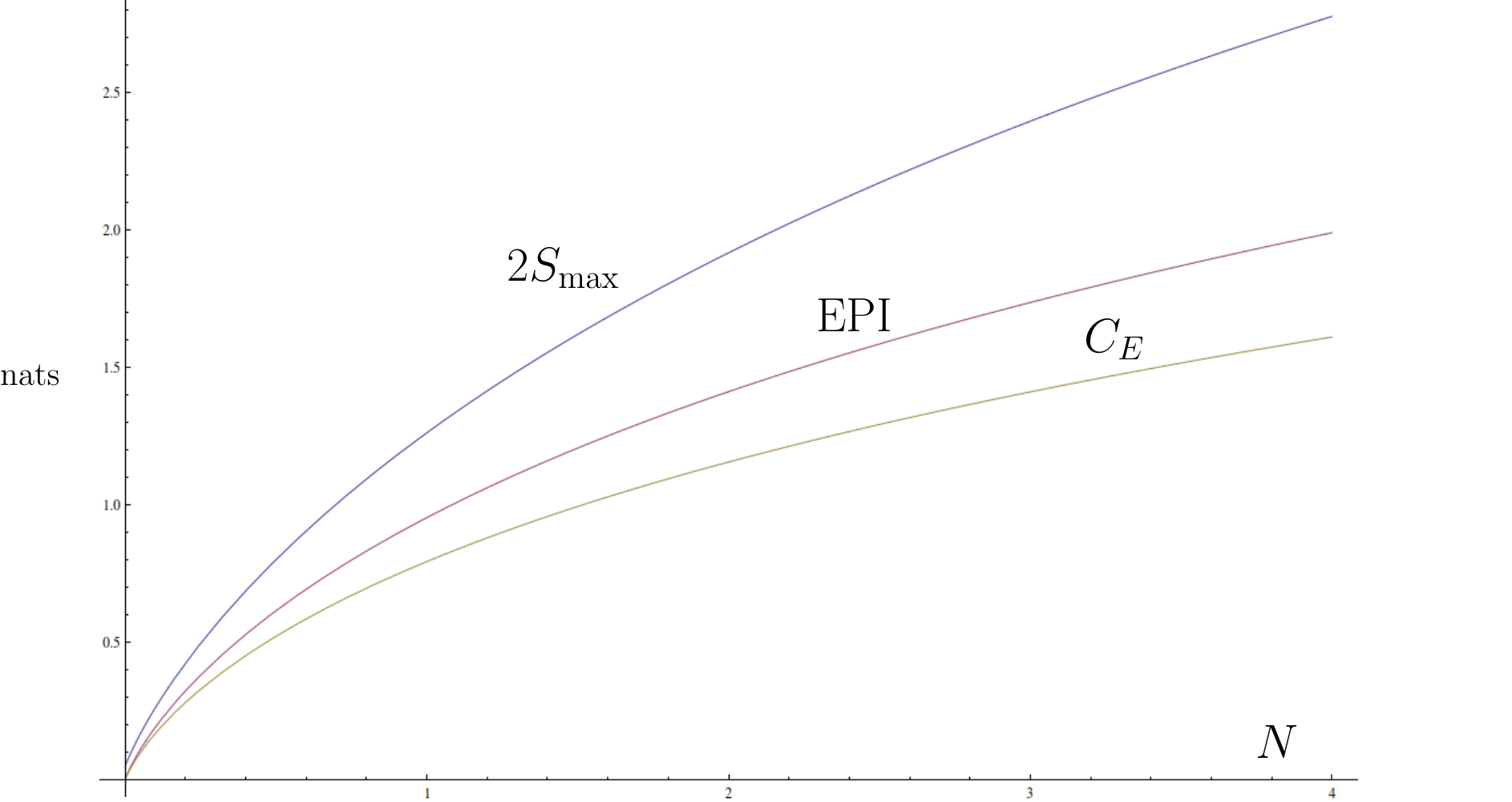}
        }
    \end{center}
\caption{Here we compare
the upper bound of Corollary~\ref{cor:entanglementassistedcapacities} with the known capacity
of the thermal noise channel (with environment photon number~$\photonnumber_E$). The latter is given by~\cite{HW01} the expression $C_E(\cE_{\lambda,\sigma_E},\photonnumber)=g(\photonnumber)+g(\photonnumber_{\max})-g((D+\photonnumber_{\max}-\photonnumber-1)/2)-g((D-\photonnumber_{\max}+\photonnumber-1)/2)$, where $D=\sqrt{(\photonnumber+\photonnumber_{\max}+1)^2-4\lambda \photonnumber(\photonnumber+1)}$.  We are interested in the case where~$\sigma_E$ is close to the vacuum state  corresponding to a pure loss channel. We plot the  capacity $C_E$, the 
EPI upper bound~\eqref{eq:boundsentanglementassistedcapacities}
and the maximum entropy upper bound~\eqref{eq:naiveupperboundentanglementassisted}. While in this case, the exact value of the capacity is known, these figures illustrate how the entropy power inequality improves over the trivial bound. Its insensitivity to the exact form of the environment's state may be useful in certain applications. In contrast, the expression for $C_E$ depends on the fact that $\sigma_E$ is a thermal Gaussian state, that is, the gauge-invariance of~$\cE_{\lambda,\sigma_E}$. \label{fig:squeezedstateexample}}
\end{figure}
Note that if the state~$\sigma_E$ is Gaussian, then the channel~$\cE_{\lambda,\sigma_E}$ is a Gaussian operation.  Various capacities of such channels have been studied in detail~\cite{HW01} (see~\cite{eisertwolf} for a recent review). In particular, the 
entanglement-assisted capacity~$C_E(\cE_{\lambda,\sigma_E},\photonnumber)$ is the result of a convex optimization problem: it is obtained by maximizing the mutual information over the (convex) set of covariance matrices~$M$ associated with Gaussian states~$\rho$ satisfying the photon number constraint\footnote{This follows from the fact that  the maximum mutual information in~\eqref{eq:generalizationoverallinputstates} is achieved on a Gaussian state~$\rho$ and the concavity of~$I(\cE,\rho)$ in~$\rho$, see~\cite{HW01}.}. In principle, this can be addressed using efficient numerical algorithms. Furthermore,   in special cases, it is possible to give an explicit expression for~$C_E(\cE_{\lambda,\sigma_E},\photonnumber)$: Holevo and Werner~\cite{HW01} computed the entanglement-assisted capacity of the attenuation/amplification channel with classical noise (the most general one-mode channel not involving squeezing). Similarly, the entanglement-assisted capacity of the broadband lossy channel was discussed in detail in~\cite{ebcc03}.  The crucial feature enabling these calculations is gauge-invariance of~$\cE_{\lambda,\sigma_E}$. Because the entanglement-assisted capacity is additive, this is equivalent to the state~$\sigma_E$ being thermal with respect to the mode operators defining the beam-splitter. In contrast, Corollary~\ref{cor:entanglementassistedcapacities} gives a bound which is applicable to all Gaussian additive channels without further restrictions.

Recall that the entropy power inequality~\cite{KoeGraeNat} provides an additive upper bound on the classical capacity of thermal noise channels. Because the entanglement-assisted capacity is additive, the conditional entropy power inequality plays a somewhat different role in Corollary~\ref{cor:entanglementassistedcapacities}: it substitutes an optimization problem by a bound depending on simple universal parameters (i.e., the entropy and the mean photon number of the environment).  One may hope that the conditional entropy power inequality may also be used to address additivity problems in the context of entanglement-assisted communication. A natural candidate problem here is the rate region of the quantum
multiple access channel (MAC) characterized by Hsieh, Devetak and Winter~\cite{Hsetal}, where a single-letter formula is not known.  The use of conditional entropy power inequalities is especially suggestive in the case of the additive bosonic MAC (see e.g.,~\cite{YenShap05}), where Alice and Bob are connected to a receiver Charlie via two arms of a beamsplitter.  As shown by
 Cezkaj et al.~\cite{czetal10} (see also~\cite{czetalsec}), this scenario
exhibits uniquely quantum activation effects:
 providing entanglement-assistance to Bob can boost Alice's maximal rate of communication. This is in contrast to analogous classical settings, where providing additional resources to Bob cannot change her maximal rate (the latter  is determined by her signal power). While the conditional entropy power inequality can be used to bound the strength of this activation effect, a direct application does unfortunately not appear to yield fundamental new insights into the  additivity problem for the bosonic MAC.

\subsubsection*{Outline}
In Section~\ref{sec:basicdefinitions}, we recall basic definitions. In Section~\ref{sec:symplecticeigenvalue}, we perturbatively compute the first-order corrections to the symplectic eigenvalues of symmetric matrices relevant in our context. In Section~\ref{sec:condentropydiff}, we apply these perturbative results to obtain the asymptotic scaling of the conditional entropy, as well as its infinitesimal rate of increase when part of a Gaussian state undergoes diffusion.  In Section~\ref{sec:conditionalfisher}, we connect this to Fisher information by establishing a de Bruijin-type identity for conditional entropies.
 In Section~\ref{sec:conditionalentropypower}, we complete the proof of the entropy power inequality for conditional entropies.

\comment{
\subsection{Multi-point scenario}
\subsubsection{Entanglement-assisted rate region of the multiple access channel}
The quantum multiple access channel is arguably one of the simplest settings in multi-terminal quantum information theory. Nevertheless, it exhibits a number of striking and uniquely quantum phenomena. Here two independent senders~$A$ and~$B$ are connected to a receiver~$C$ by a quantum channel~$\cM:\cB(A\otimes B)\rightarrow\cB(C)$ (called a multiple access channel or MAC). The problem is to characterize the region of achievable rate pairs~$(R_A,R_B)$ for the two senders. The analogous problem for classical multiple access channels was solved by Ahlswede~\cite{Ahlswede} and Liao~\cite{Liao}. In the quantum case, the achievable rate region for unassisted communication using product states was found by Winter~\cite{AWinterMAC}. The achievable rate region~$\cR_*(\cM)$ for entanglement-assisted communication over the quantum MAC was characterized by Hsieh, Devetak and Winter~\cite{Hsetal}. 
It can be described as follows: Let $\cR_*^{(1)}(\cM)$ be the set of nonnegative rate pairs $(R_A^*,R_B^*)$ satisfying
\begin{align}
\begin{matrix}
R^*_A&\leq & I(A':C|B')\\
R^*_B&\leq  &I(B':C|A')\\
R^*_A+R^*_B&\leq& I(A'B':C)\ ,
\end{matrix}\label{eq:rabstar}
\end{align}
where the entropies are evaluated on~$\rho_{A'B'C}=(\cM\otimes I_{A'B'})(\psi_{AA'}\otimes\varphi_{BB'})$ and $\psi_{AA'}$ and $\varphi_{BB'}$ are two pure states. Hsieh et al.~\cite{Hsetal} showed that the achievable rate region is given by the regularized expression
\begin{align}
\cR_*(\cM)&=\overline{\bigcup_{n=1}^\infty \frac{1}{n} \cR_*^{(1)}(\cM^{\otimes n})}\ ,\label{eq:regularizedexpression}
\end{align}
where the bar indicates taking the closure. They  also showed the single-letter upper bound
\begin{align}
R_A^*+R_B^*\leq \max_{\psi_{A,A'},\varphi_{BB'}}I(A'B':C)\ .\label{eq:singleletterupperboundhsieh}
\end{align}
on the sum-rate for any achievable rate pair~$(R_A^*,R_B^*)\in\cR_*(\cM)$.

For bosonic MACs, the  achievable rate region  has the same form as~\eqref{eq:regularizedexpression}, but the set of allowed inputs needs to constrained (following e.g.,~\cite{Holevo03}). More precisely,  let 
$(\photonnumber_A,\photonnumber_B)$ be the maximal mean photon numbers of the  senders (per mode). Define the set~$\cR_*^{(1)}(\cM,\photonnumber_A,\photonnumber_B)$ of pairs~$(R_A^*,R_B^*)$ satisfying~\eqref{eq:rabstar} 
as before, but with the reduced states
$\tr_{A'}\psi_{AA'}$ and $\tr_{B'}\varphi_{BB'}$ having mean photon numbers bounded by~$\photonnumber_A$ and $\photonnumber_B$, respectively. Similarly, define the set of pairs~$\cR_*^{(1)}(\cM^{\otimes n},\photonnumber_A,\photonnumber_B)$ obtained by considering $2n$-mode pure states $\psi_{A^nA'^n}$, $\varphi_{B^nB'^n}$ all of whose one-mode reduced density operators have mean photon numbers bounded by~$\photonnumber_A$ and $\photonnumber_B$, respectively. (This corresponds to
the physical constraint that Alice and Bob are restricted by a maximal energy per mode.)   The achievable rate region for the bosonic MAC is the closure~$\cR_*(\cM,\photonnumber_A,\photonnumber_B)=\overline{\bigcup_{n=1}^\infty \frac{1}{n}\cR_*^{(1)}(\cM^{\otimes n},\photonnumber_A,\photonnumber_B)}$.

  We will also be interested in the case where the senders are restricted to Gaussian encodings. This corresponds to the Gaussian capacity region~$\cR_{*,G}(\cM,\photonnumber_A,\photonnumber_B)$ obtained by additionally restricting~$\psi_{AA'}$ and $\varphi_{BB'}$ to be Gaussian states.

\subsubsection{A single-letter outer bound for the capacity region of the additive MAC}
As a second application of the conditional entropy power inequalities,  we consider
 classical communication over the bosonic additive multiple access channel (also called quantum additive MAC) illustrated in Figure~\ref{fig:twopointbasic}. This  is a setting where two independent senders, Alice~(A) and Bob~(B), are connected to a receiver Charlie~(C) by a beamsplitter. It has been studied in detail. Yen and Shapiro~\cite{YenShap05} considered unassisted capacities and determined the capacity region when~$A$ and~$B$ are restricted to coherent state encodings. They also gave inner and outer bounds for the capacity region. These results are restricted to product-state encodings, i.e., codewords which are not entangled across several channel uses. They are derived from Winter's `unregularized' quantum MAC capacity theorem~\cite{AWinterMAC}.
   Cezkaj et al.~\cite{czetal10} (see also~\cite{czetalsec}) discovered a uniquely quantum activation effect for such additive MACs (and other channels): providing entanglement-assistance to Bob can boost Alice's maximal rate of communication. This is in contrast to analogous classical settings, where providing additional resources to Bob cannot change her maximal rate (the latter  is determined by her signal power). 

It is desirable to understand the degree of activation achievable in these settings, and more generally, characterize the  capacity region for entanglement-assisted communication.
A general analysis is  difficult because only the regularized expression~\eqref{eq:regularizedexpression} is available.  Some upper bounds on the individual communication rates and the sum-rate have been obtained in~\cite{czetal10} (using similar arguments as in the unassisted setting~\cite{YenShap05}). Let us discuss two bounds of this form: Suppose~$(R_A^*,R_B^*)$ is an achievable rate pair. It is straightforward to find upper bounds of the form
\begin{align}
\begin{matrix}
R_A^*&\leq &2g(\photonnumber_A)\\
R_B^*&\leq &2g(\photonnumber_B)\\
R_A^*+R_B^*&\leq &2g(\photonnumber_{\max})\ ,
\end{matrix}\label{eq:naiveupperboundmac}
\end{align}
where $\photonnumber_{\max}=\lambda\photonnumber_A+(1-\lambda)\photonnumber_B$. 
\begin{proof}
The first two 
\end{proof}
Here we improve on the upper bounds~\eqref{eq:naiveupperboundmac} using conditional entropy
power inequalities. The resulting bounds holds unconditionally for Gaussian coding strategies, and for arbitrary strategies if Conjecture~\eqref{eq:qconditionalentropyinequality} is true for all states. 
\begin{center}
\begin{figure}
\begin{center}
\includegraphics{additivetwopoint.pdf}
\end{center}
\caption{\label{fig:twopointbasic}The  additive multiple access channel~$\cM_\lambda$ with transmissivity~$\lambda$ (dotted box). }
\end{figure}
\end{center}

\begin{corollary}\label{cor:singleletterizationMAC}
Let~$\cM_\lambda$ be the Gaussian MAC with transmissivity~$\lambda\in [0,1]$. 
Suppose $(R^*_A,R^*_B)\in\cR_{*,G}(\cM_\lambda,\photonnumber_A,\photonnumber_B)$ is in the Gaussian rate region. Then
\begin{align}
\begin{matrix}
R^*_A&\leq &\sup_{\theta}\left(S(C|B')_\theta+\lambda S(A')_\theta+(1-\lambda) S(B')_\theta\right)\\
R^*_B&\leq &\sup_{\theta}\left(S(C|A')_\theta+\lambda S(A')_\theta+(1-\lambda)S(B')_\theta\right)\\
R^*_A+R^*_B&\leq &\sup_{\theta} I(A'B':C)_\theta\leq \sup_\theta \left(S(C)+\lambda S(A')_\theta+(1-\lambda)S(B')_\theta \right)\ .
\end{matrix}\ \label{eq:upperboundssingleletterized}
\end{align}
Here the suprema are over all states~$\theta$ of the form $\theta_{CA'B'}=(\cM_\lambda\otimes I_{A'B'})(\rho_{AA'}\otimes\sigma_{BB'})$, where the two state~$\rho_{AA'},\sigma_{BB'}$ satisfy the mean photon number constraints
\begin{align}
\tr(a^\dagger a\rho_{A})\leq \photonnumber_A\qquad \textrm{ and }\qquad
\tr(a^\dagger a\sigma_B)\leq \photonnumber_B\ .\label{eq:etaenergyconstraints}
\end{align}
In particular,
\begin{align}
\begin{matrix}
R_A^*&\leq &f_\lambda(\photonnumber_A,\photonnumber_B)\\
R_B^*&\leq &f_{1-\lambda}(\photonnumber_B,\photonnumber_A)\\
R_A^*+R_B^*&\leq & g(\photonnumber_{\max})+\lambda g(\photonnumber_A)+(1-\lambda) g(\photonnumber_B)\ ,
\end{matrix}\label{eq:rabupp}
\end{align}
where 
\begin{align*}
f_\lambda(\photonnumber_A,\photonnumber_B)&=\sup_{\psi_{AA'},\varphi_{BB'}}S(C|B')+\lambda S(A')+(1-\lambda)S(B')\ ,
\end{align*}
and the supremum is taken over all Gaussian pure~states~$\psi_{AA'},\varphi_{BB'}$ with mean photon number on $A$ and $B$ bounded by $\photonnumber_A$ and $\photonnumber_B$, respectively.

Moreover, if conjecture~\eqref{eq:qconditionalentropyinequality} holds for arbitrary states, then these bounds applies to arbitrary coding strategies, i.e., for all rate pairs~$(R_A^*,R_B^*)\in\cR_{*}(\cM_\lambda,\photonnumber_A,\photonnumber_B)$. 
\end{corollary}
\begin{proof}
Let $(R^*_A,R^*_B)\in\cR_{^*,G}$. Then there are Gaussian pure states~$\psi_{A^nA'^n}$ and~$\varphi_{B^nB'^n}$ on $n$~copies of Alice's system~$AA'$ and Bob's system~$BB'$, respectively, such that
\begin{align*}
\begin{matrix}
R^*_A&\leq &\frac{1}{n}I(A'^n:C^n|B^n)\\
R^*_B&\leq &\frac{1}{n}I(B'^n:C^n|A^n)\\
R^*_A+R^*_B&\leq &\frac{1}{n}I(A'^nB'^n:C^n)
\end{matrix}
\end{align*}
for the state $\rho_{A'^nB'^nC^n}=(\cM_\lambda^{\otimes n}\otimes I_{A'^nB'^n})(\psi_{A^nA'^n}\otimes\varphi_{B^nB'^n})$, and where the mean photon numbers satisfy
\begin{align*}
\langle (a^\dagger a)_{A_j}\rangle_{\psi_{A^nA'^n}}\leq \photonnumber_A\qquad\textrm{ and }\qquad \langle(a^\dagger a)_{B_j}\rangle_{\varphi_{B^nB'^n}}\leq \photonnumber_B\qquad\textrm{ for all }j=1,\ldots,n\ .
\end{align*}
We have 
\begin{align*}
I(A'^n:C^n|B'^n)&=S(C^n|B'^n)-S(C'^n|A'^nB'^n)\\
&\leq S(C^n|B'^n)-\lambda S(A^n|A'^n)-(1-\lambda)S(B^n|B'^n)\\
&=S(C^n|B'^n)+\lambda S(A'^n)+(1-\lambda)S(B'^n)\ ,
\end{align*}
where we used the entropy power inequality in the second step and the fact that $\psi$ and $\varphi$ are pure in the third step.  By subadditivity of the entropy as well as the conditional entropy, we can bound this as
\begin{align*}
I(A'^n:C^n|B'^n)&\leq \sum_{j=1}^n\left(S(C_j|B'_j)+\lambda S(A'_j)+(1-\lambda)S(B'_j)\right)
\end{align*}
Each term in the sum on the rhs.~is a function of the state~$\theta_{C_jA_j'B'_j}=(\cM_\lambda\otimes I_{A'_jB'_j})(\rho_{A_jA_j'}\otimes \sigma_{B_jB'_j})$, which only depends on the reduced density operator~$\rho_{A_jA_j'}\otimes\sigma_{B_jB'_j}$ of $\psi_{A^nA'^n}\otimes\varphi_{B^nB'^n}$. Taking the index~$j$ which maximizes
$S(C_j|B'_j)+\lambda S(A'_j)+(1-\lambda)S(B'_j)$ gives
\begin{align}
\frac{1}{n}I(A'^n:C^n|B'^n) &\leq S(C|B')_\theta+\lambda S(A')_\theta+(1-\lambda) S(B')_\theta\ ,\label{eq:aprimebc}
\end{align}
where rhs.~is  evaluated on the reduced density operator of the state
\begin{align}
\theta^{(j)}=\theta_{C_jA'_jB'_j}=(\cM_\lambda\otimes I_{B'})(\rho_{A_jA_j}\otimes\sigma_{B_jB'_j})\ .\label{eq:thetajjdef}
\end{align}
 An analogous calculation using the entropy power inequality gives
\begin{align}
\frac{1}{n}I(B'^n:C^n|A'^n)\leq S(C|A')_{\theta^{(k)}}+\lambda S(A')_{\theta^{(k)}}+(1-\lambda)S(B')_{\theta^{(k)}}\ \label{eq:bprimebc}
\end{align}
for some $k\in \{1,\ldots,n\}$. Similarly and without using the entropy power inequality, one can use subadditivity of the conditional entropy and concavity of the mutual information to argue that (see~\cite[Eq. (52)]{Hsetal}) 
\begin{align}
\frac{1}{n}I(A'^nB'^n:C^n)\leq I(A'B':C)_{\theta^{(\ell)}}\ .\label{eq:sumrateboundv}
\end{align}
for some~$\ell\in\{1,\ldots,n\}$. 
The upper bounds~\eqref{eq:upperboundssingleletterized} on $R_A^*$, $R_B^*$ and $R_A^*+R_B^*$  follow from~\eqref{eq:aprimebc},~\eqref{eq:bprimebc} and~\eqref{eq:sumrateboundv} because the states~$\{\rho_{A_jA_j'}\otimes\sigma_{B_jB_j'}\}_{j=1}^n$ in the definition~\eqref{eq:thetajjdef} of the states~$\{\theta^{(j)}\}_{j=1}^n$ satisfy the energy constraints~\eqref{eq:etaenergyconstraints}. The additional looser upper bound on~$R_A^*+R_B^*$ follows by applying the entropy power inequality to the state~$\theta^{(\ell)}$.

The bounds~\eqref{eq:rabupp}   follow immediately from~\eqref{eq:upperboundssingleletterized} and the maximum entropy principle. 
\end{proof}
}

\section{Basic definitions\label{sec:basicdefinitions}}
Consider $N$~bosonic modes described by mode operators $\vec{R}=(Q_1,P_1,\ldots,Q_N,P_N)$ satisfying the canonical commutation relations
\begin{align*}
[R_k,R_\ell]=iJ_{k,\ell}\qquad\textrm{ where }\qquad J_N=\left(\begin{matrix}
0 & 1\\ 
-1 & 0
\end{matrix}\right)^{\oplus N}\ .
\end{align*}
A  Gaussian quantum state~$\rho=\rho_{M,\vec{d}}$ on this system is completely specified by its first and second moments, i.e., the 
 displacement vector $\vec{d}=(d_1,\ldots,d_{2N})\in\mathbb{R}^{2N}$
and its (symmetric) covariance matrix~$M$, defined by
\begin{align*}
d_k=\tr(\rho R_k)\qquad\textrm{ and }\qquad  M_{j,k}=\tr(\rho\{R_j-d_j,R_k-d_k\})\ .
\end{align*}
Here $\{A,B\}=AB+BA$. We call $\rho_{M,0}$ a centered Gaussian state and often write
$\rho_M$ for it. 

A Gaussian operation 
maps Gaussian states to Gaussian states and is determined by its action on $(M,d)$. 
An example is a displacement (or Weyl) operator $W(\vec{\xi})$, $\vec{\xi}\in\mathbb{R}^{2N}$: this is a unitary operation satisfying
\begin{align}
W(\vec{\xi})\rho_{M,\vec{d}}W(\vec{\xi})^\dagger &=\rho_{M,\vec{d}+\vec{\xi}}\ ,\label{eq:displacementaction}
\end{align}
for all Gaussian states $\rho_{M,d}$. It will sometimes be convenient to write the conjugation map as $\cW_{\vec{\xi}}(\rho):=W(\vec{\xi})\rho W(\vec{\xi})^\dagger$ .
Statement~\eqref{eq:displacementaction} is equivalent to the  Heisenberg action on the mode operators 
\begin{align}
W(\xi)^\dagger R_k W(\xi)=R_k+\xi_k\qquad\textrm{ for all }k=1,\ldots,2N\ .\label{eq:heisenbergactiondisplacement}
\end{align}
A matrix $S$ satisyfing $SJ_NS^T=J_N$ is called symplectic. A symplectic matrix~$S$ uniquely defines a Gaussian unitary $U_S$
by the action on mode operators
\begin{align}
U_S^\dagger R_j U_S&=\sum_{k=1}^{2N} S_{j,k} R_k=:R_j'\qquad\textrm{ for all }j=1,\ldots,2N\ .\label{eq:transformationmode}
\end{align}
Because $S$ is symplectic, the transformed operators $\vec{R}'=(R_1',\ldots,R_{2N}')=(Q_1',P_1',\ldots,Q_N',P_N')$ again satisfy canonical commutation relations. It is convenient to define the (transformed) creation and annihilation operators
\begin{align*}
a_k^\dagger =\frac{1}{\sqrt{2}}(Q'_k+iP'_k)\qquad\textrm{ and }\qquad
a_k=\frac{1}{\sqrt{2}}(Q_k'-iP_k')\ .
\end{align*}
The associated number operators 
\begin{align}
\hat{n}_k=a_k^\dagger a_k=\frac{1}{2}((Q'_k)^2+(P'_k)^2)-\frac{1}{2}I,\qquad k=1,\ldots,N\ \label{eq:numberoperatorsev}
\end{align} are mutually commuting, and there is a simultaneous orthonormal eigenbasis of the form $\ket{n}=\ket{n_1,\ldots,n_k}$ with $\hat{n}_k\ket{n}=n_k\ket{n}$. 

 Eq.~\eqref{eq:transformationmode} translates
into the action
\begin{align}
U_S\rho_{M,d}U_S^\dagger &=\rho_{SMS^T,Sd}\ ,\label{eq:USactioncovariance}
\end{align}
on the covariance matrix and the displacement vector. In particular, displacement operators
and unitaries of the form $U_S$ can be used
to diagonalize any Gaussian state (with~\eqref{eq:displacementaction} and~\eqref{eq:USactioncovariance}). More precisely, Williamson's theorem~\cite{Williamson36}  states that a symmetric  positive definite matrix~$M$ can be diagonalized by a symplectic matrix~$S$, i.e.,
\begin{align}
SMS^T=\diag(\lambda_1,\lambda_1,\lambda_2,\lambda_2,\ldots,\lambda_N,\lambda_N)=:D_N(\vec{\lambda})\  .\label{eq:symplecticdiagonalizationgamma}
\end{align}
We call $\vec{\lambda}=(\lambda_1,\ldots,\lambda_N)$ the symplectic eigenvalues of~$M$. Observe that this list may include multiplicities (i.e., $\lambda_i=\lambda_j$ for $i\neq j$). We will occasionally denote the set of distinct symplectic eigenvalues
as~$\sspec(M)$. The quantity~$\Delta(M)=\min_{\lambda,\tilde{\lambda}\in \sspec(M), \lambda\neq \tilde{\lambda}}|\lambda-\tilde{\lambda}|$ will be referred to as the symplectic gap of~$M$. 

With~\eqref{eq:USactioncovariance}, identity~\eqref{eq:symplecticdiagonalizationgamma} implies that a centered Gaussian state $\rho_M$ with covariance matrix $M$ can be brought into product form as
\begin{align}
U_S\rho_M U_S^\dagger =\bigotimes_{j=1}^n \frac{e^{-\beta_j \hat{n}_j}}{\tr e^{-\beta_j\hat{n}_j}}\ ,\label{eq:usrhodiagonalization}
\end{align}
where the inverse temperatures~$\beta_j=\beta(\lambda_j)$ are given in terms of the symplectic eigenvalues by
\begin{align}
\beta(\lambda)=\log \frac{\lambda+1}{\lambda-1}\ .\label{eq:inversetemperaturedef}
\end{align}
Note  that $\beta(\lambda)$ is monotonically decreasing
with increasing $\lambda$, $\beta(\lambda)<2$ for $\lambda>2$, and
$\lim_{\lambda\rightarrow\infty}\beta(\lambda)=0$.  If $\lambda_j=1$, then the factor~$\frac{e^{-\beta_j \hat{n}_j}}{\tr e^{-\beta_j\hat{n}_j}}$ in~\eqref{eq:usrhodiagonalization} needs to be replaced by the pure `vacuum' state
$\proj{0}_{Q_j'P_j'}$ associated the the mode operators~$Q'_j$, $P'_j$.  Eq.~\eqref{eq:usrhodiagonalization} shows that the number states~$\{\ket{n}\}_{n\in\mathbb{N}_0^N}$ corresponding to the transformed mode operators~$\vec{R}'$ are an eigenbasis of~$U_S\rho_MU_S^\dagger$.

The entropy $S(\rho)=-\tr(\rho\log\rho)$ of an  $N$-mode Gaussian state~$\rho=\rho_{M,d}$ 
only depends on the symplectic eigenvalues $\vec{\lambda}=(\lambda_1,\ldots,\lambda_N)$ of the covariance matrix. To express it, it is useful to define the mean photon number~$\photonnumber(\lambda_k)$ in the eigenmode~$k$ by the function
\begin{align}
\photonnumber(\lambda)=(\lambda-1)/2\label{eq:meannumbersymplectic}
\end{align}
of a symplectic eigenvalue~$\lambda$. Then the entropy is given by
\begin{align}
S(\rho)&=\sum_{j=1}^N g(\photonnumber(\lambda_j))\qquad\textrm{ with }\qquad g(\photonnumber):=(\photonnumber+1)\log (\photonnumber+1)-\photonnumber\log \photonnumber\ .\label{eq:entropydefinition}
\end{align}
Let us give a  simple bound on the dependence of the entropy on the symplectic eigenvalues.
\begin{lemma}\label{lem:entropycontinuity}
Suppose $\rho$ and $\sigma$ are $N$-mode Gaussian states with  (arbitrarily ordered)
symplectic eigenvalues $\vec{\nu}=(\nu_1,\ldots,\nu_N)$ 
and $\vec{\lambda}=(\lambda_1,\ldots,\lambda_N)$, respectively, 
and assume that $\lambda_j\neq 1$ for all $j=1,\ldots,N$. 
Let $\|\vec{\nu}-\vec{\lambda}\|_\infty=\max_{1\leq j\leq N} |\nu_j-\lambda_j|$
and $\lambda_*=\min_{1\leq j\leq N}\lambda_j$.
Then
\begin{align*}
|S(\rho)-S(\sigma)|\leq
\frac{N}{2}\left(\|\vec{\nu}-\vec{\lambda}\|_\infty\beta(\lambda_*)+
\frac{\|\vec{\nu}-\vec{\lambda}\|_\infty^2}{\lambda_*^2-1}\right)\ .
\end{align*}
\end{lemma}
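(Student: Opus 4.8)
The plan is to reduce the statement to a one-mode estimate, bounding $|S(\rho)-S(\sigma)|$ term-by-term along the symplectic spectra. By~\eqref{eq:entropydefinition}, $S(\rho)=\sum_{j=1}^N g(\photonnumber(\nu_j))$ and $S(\sigma)=\sum_{j=1}^N g(\photonnumber(\lambda_j))$, where $\photonnumber(\lambda)=(\lambda-1)/2$. Hence, by the triangle inequality,
\begin{align*}
|S(\rho)-S(\sigma)|\leq \sum_{j=1}^N |g(\photonnumber(\nu_j))-g(\photonnumber(\lambda_j))|\ ,
\end{align*}
and it suffices to control $|\tilde g(\nu)-\tilde g(\lambda)|$ for a single pair, where $\tilde g(\lambda):=g(\photonnumber(\lambda))=g((\lambda-1)/2)$ is the entropy of a one-mode thermal state with symplectic eigenvalue~$\lambda$. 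The key is a Taylor estimate for $\tilde g$ with explicit first and second derivatives: one computes $\tilde g'(\lambda)=\tfrac{1}{2}\log\tfrac{\lambda+1}{\lambda-1}=\tfrac{1}{2}\beta(\lambda)$ using~\eqref{eq:inversetemperaturedef}, and $\tilde g''(\lambda)=-\tfrac{1}{\lambda^2-1}$. Note $\tilde g'>0$ and $\tilde g''<0$ for $\lambda>1$, so $\tilde g$ is increasing and concave on $(1,\infty)$.

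Next I would apply Taylor's theorem with Lagrange remainder to $\tilde g$ around $\lambda_j$, evaluated at $\nu_j$: there is a point $\xi_j$ between $\lambda_j$ and $\nu_j$ with
\begin{align*}
\tilde g(\nu_j)-\tilde g(\lambda_j)=\tilde g'(\lambda_j)(\nu_j-\lambda_j)+\tfrac{1}{2}\tilde g''(\xi_j)(\nu_j-\lambda_j)^2\ .
\end{align*}
Taking absolute values, $|\tilde g(\nu_j)-\tilde g(\lambda_j)|\leq \tfrac{1}{2}\beta(\lambda_j)\,|\nu_j-\lambda_j| + \tfrac{1}{2}\cdot\tfrac{1}{\xi_j^2-1}\,|\nu_j-\lambda_j|^2$. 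Since $\beta$ is decreasing, $\beta(\lambda_j)\leq\beta(\lambda_*)$; and since $|\nu_j-\lambda_j|\leq\|\vec\nu-\vec\lambda\|_\infty$, summing over $j$ gives the factor $N/2$ in front and the claimed first term $\tfrac{N}{2}\|\vec\nu-\vec\lambda\|_\infty\beta(\lambda_*)$. For the second term, one needs a lower bound $\xi_j^2-1\geq\lambda_*^2-1$ uniform in $j$.

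The main obstacle is exactly this last point: controlling the remainder requires that the intermediate point $\xi_j$ (and hence the whole segment between $\lambda_j$ and $\nu_j$) stays in a region where $\lambda^2-1$ is bounded below by $\lambda_*^2-1$. This is not automatic --- if some $\nu_j$ dips below $\lambda_*$ (or below $1$), the bound on $\tilde g''$ blows up. The resolution I would pursue: the statement as written is really to be applied in a regime where $\|\vec\nu-\vec\lambda\|_\infty$ is small compared to $\lambda_*-1$, so that every $\nu_j$ (being within $\|\vec\nu-\vec\lambda\|_\infty$ of some $\lambda_j\geq\lambda_*$) still exceeds $\lambda_*$ up to the perturbation; more carefully, one argues that either $\xi_j\geq\lambda_j\geq\lambda_*$ (when $\nu_j\geq\lambda_j$, and then $\xi_j^2-1\geq\lambda_*^2-1$ trivially by monotonicity of $\lambda\mapsto\lambda^2-1$), or $\nu_j\leq\xi_j\leq\lambda_j$ (when $\nu_j<\lambda_j$). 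In the latter case one must additionally invoke that $\nu_j>1$ --- which holds since symplectic eigenvalues of a genuine covariance matrix satisfy $\nu_j\geq 1$, with the strict inequality needed and presumably guaranteed by the hypothesis that $\vec\lambda$ stays bounded away from~$1$ together with the perturbative setting in which the lemma is used (it is invoked after the symplectic-eigenvalue perturbation analysis of Section~\ref{sec:symplecticeigenvalue}). Granting $\nu_j>1$, concavity of $\tilde g$ on $(1,\lambda_j]$ lets one bound the second-order term by its value at the worst endpoint; combined with $|\nu_j-\lambda_j|\leq\|\vec\nu-\vec\lambda\|_\infty$ this yields $\tfrac{1}{2}\cdot\tfrac{1}{\lambda_*^2-1}\|\vec\nu-\vec\lambda\|_\infty^2$ per mode, and summing gives the second term $\tfrac{N}{2}\cdot\tfrac{\|\vec\nu-\vec\lambda\|_\infty^2}{\lambda_*^2-1}$. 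Everything else is a routine calculation of the two derivatives of $g\circ\photonnumber$ and bookkeeping of the triangle inequality.
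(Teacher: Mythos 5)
Your proposal follows essentially the same route as the paper's proof: the per-mode triangle inequality, the identities $G'(\lambda)=\beta(\lambda)/2$ and $G''(\lambda)=-1/(\lambda^2-1)$ for $G(\lambda)=g(\photonnumber(\lambda))$, a Taylor estimate, and monotonicity of $\beta$ to replace $\beta(\lambda_j)$ by $\beta(\lambda_*)$. The subtlety you flag about the Lagrange intermediate point when $\nu_j<\lambda_j$ is genuine but is not handled in the paper either---its proof simply writes $|G(\lambda+\epsilon)-G(\lambda)|\le \epsilon\beta(\lambda)/2+\epsilon^2/(2|\lambda^2-1|)$ with the second derivative evaluated at the expansion point, which is rigorous only when the eigenvalue increases (since $|G''|$ is decreasing on $(1,\infty)$); as you surmise, the lemma is only ever invoked in the asymptotic regime where the eigenvalue shifts are $O(1/t)$ and the constant in the second-order term is immaterial, so you have not missed anything relative to the paper.
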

\noindent The requirement that~$\sigma$ has no symplectic eigenvalue equal to~$1$ (or equivalently $\lambda_*>1$) implies that none of the eigenmodes factors out in a pure product state, and we can usually assume this without loss of generality in our considerations below.

\begin{proof}
We have $g'(\photonnumber)=\log ((\photonnumber+1)/\photonnumber)$, and combining
this with~\eqref{eq:meannumbersymplectic}, we obtain the inverse temperature  according to
\begin{align}
g'(\photonnumber(\lambda))=\beta(\lambda)\ .\label{eq:gprimeNgamma}
\end{align}
Furthermore, we
have
\begin{align}
\beta'(\lambda)=-\frac{2}{\lambda^2-1}\ .\label{eq:betaderivative}
\end{align}
Let us define the function $G(\lambda):=g(\photonnumber(\lambda))$. Because $\photonnumber'(\lambda)=\lambda/2$, 
Eqs.~\eqref{eq:gprimeNgamma} and~\eqref{eq:betaderivative} imply
\begin{align*}
G'(\lambda)=\beta(\lambda)/2\qquad\textrm{ and }\qquad G''(\lambda) =-\frac{1}{\lambda^2-1}\ .
\end{align*}
Hence the Taylor series expansion gives
\begin{align*}
|G(\lambda+\epsilon)-G(\lambda)|\leq \epsilon \beta(\lambda)/2 + \frac{\epsilon^2}{2|\lambda^2-1|}\ .
\end{align*}
Since $\beta$ is monotonically decreasing, we get
\begin{align*}
\max_j |G(\nu_j)-G(\lambda_j)|\leq \frac{1}{2}\max_k |\nu_k-\lambda_k| 
\beta(\min_j\lambda_j) + \frac{\max_k |\nu_k-\lambda_k|^2}{2\min_j|\lambda_j^2-1|}\ 
\end{align*}
The claim follows from this because
$S(\rho)=\sum_{j=1}^N G(\nu_j)$ and $S(\sigma)=\sum_{j=1}^N G(\lambda_j)$\ .
\end{proof}

\section{Perturbation theory for symplectic eigenvalues\label{sec:symplecticeigenvalue}}
We begin with a straightforward application of degenerate perturbation theory
which is illustrative of the method. We will need slightly more involved statements for bipartite systems below (Lemma~\ref{lem:firstorderdegenerate}). Note that a similar
perturbative analysis was used in a different context 
 in~\cite[Appendix B]{SerafiniEisertWolf05}.

\begin{lemma}[Perturbation of symplectic spectrum to $0$th order]\label{lem:zerothorder}
Let $M$ be a covariance matrix of $N$ modes and consider the 
symplectic eigenvalues $\vec{\lambda}(\epsilon)=(\lambda_1(\epsilon),\ldots,\lambda_N(\epsilon))$ of 
\begin{align*}
M^\infty(\epsilon)&=I+\epsilon M\ ,
\end{align*}
where  we assume that $\|M\|=O(1)$ and $\epsilon\ll 1$.  Then
\begin{align*}
\lambda_j(\epsilon)=1+O(\epsilon)\qquad\textrm{ for all }j=1,\ldots, N\ .
\end{align*}
\end{lemma}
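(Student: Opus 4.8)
The plan is to reduce the statement to a continuity/eigenvalue-estimate argument that avoids invoking perturbation theory for symplectic eigenvalues in full generality (which is developed only later). First I would recall the basic fact, following from Williamson's theorem and standard symplectic linear algebra, that the symplectic eigenvalues of a positive-definite symmetric matrix $A$ coincide with the absolute values of the eigenvalues of the matrix $iJ_N A$ (equivalently, the positive square roots of the eigenvalues of $-J_N A J_N A$, or of $A^{1/2}(iJ_N)A^{1/2}$). This identification turns a statement about symplectic spectra into a statement about ordinary spectra of a matrix depending continuously on $\epsilon$, for which one can use Weyl-type perturbation bounds.

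Concretely, apply this to $A = M^\infty(\epsilon) = I + \epsilon M$. Then $iJ_N M^\infty(\epsilon) = iJ_N + \epsilon\, iJ_N M$. The eigenvalues of $iJ_N$ are exactly $\pm 1$, each with multiplicity $N$ (since $J_N$ is skew-symmetric with $J_N^2 = -I$), so the symplectic eigenvalues of $I$ are all equal to $1$, as expected. Since $iJ_N M^\infty(\epsilon)$ is a Hermitian-like perturbation of $iJ_N$ of operator-norm $\|\epsilon\, iJ_N M\| \leq \epsilon \|M\| = O(\epsilon)$, a standard eigenvalue perturbation bound (Bauer–Fike, or Weyl's inequality applied after noting $iJ_N A$ is diagonalizable with controlled eigenvector condition number, or working directly with $-J_N A J_N A$ which is similar to a symmetric matrix) gives that every eigenvalue of $iJ_N M^\infty(\epsilon)$ lies within $O(\epsilon)$ of $\pm 1$. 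Taking absolute values, each symplectic eigenvalue $\lambda_j(\epsilon)$ is within $O(\epsilon)$ of $1$, which is the claim. Here the hypothesis $\|M\| = O(1)$ and $\epsilon \ll 1$ is exactly what makes the perturbation bound effective and keeps the perturbed eigenvalues away from $0$ (so that taking absolute values is harmless).

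Alternatively, and perhaps more in the spirit of what is proved later, one can argue by an explicit symplectic diagonalization: let $S$ be a fixed symplectic matrix with $SMS^T = D_N(\vec{\mu})$ diagonal (Williamson applied to $M$ — if $M$ is only positive semidefinite one first perturbs it slightly, or handles the kernel separately). Then $S M^\infty(\epsilon) S^T = S S^T + \epsilon D_N(\vec\mu)$, and since $SS^T$ is a fixed positive-definite matrix independent of $\epsilon$, its symplectic eigenvalues are some fixed numbers; but this does not immediately give $1 + O(\epsilon)$ unless $SS^T = I$, which need not hold. So I would keep $S$ diagonalizing $I$ trivially instead: the symplectic eigenvalues of $M^\infty(\epsilon)$ are a continuous (indeed, by the $iJA$ formula, Lipschitz on bounded sets) function of the matrix entries, and at $\epsilon = 0$ they all equal $1$; Lipschitz continuity with constant controlled by $\|M\|$ then yields $\lambda_j(\epsilon) = 1 + O(\epsilon)$.

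The main obstacle is purely a matter of bookkeeping: symplectic eigenvalues are not eigenvalues of a symmetric matrix, so one cannot quote Weyl's inequality directly, and one must pass through the $iJ_N A$ (or $-J_N A J_N A$) representation and check that the relevant matrices are diagonalizable with condition number $O(1)$ near $\epsilon = 0$ so that Bauer–Fike applies with a uniform constant. This is straightforward because at $\epsilon = 0$ the matrix is $iJ_N$, which is unitarily diagonalizable (normal), so its eigenvector matrix is unitary and the condition number is exactly $1$; a small perturbation keeps it $O(1)$. Once this is in place the $O(\epsilon)$ bound is immediate.
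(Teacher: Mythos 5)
Your proposal is correct and follows essentially the same route as the paper: both identify the symplectic eigenvalues of $M^\infty(\epsilon)$ with the spectrum of $iJ_N M^\infty(\epsilon)$, write this as $iJ_N$ (eigenvalues $\pm 1$) plus a perturbation of norm $O(\epsilon)$, and conclude that every symplectic eigenvalue is $1+O(\epsilon)$. The only cosmetic difference is that you justify the spectral stability via a Bauer--Fike/Weyl-type bound (with the unperturbed matrix $iJ_N$ normal, so condition number $1$), whereas the paper simply invokes standard first-order degenerate perturbation theory, which it needs anyway for the later, more refined lemmas.
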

\begin{proof}
The symplectic eigenvalues $\vec{\lambda}(\epsilon)$ can be obtained from the spectrum~$\spec(iJ_NM^\infty(\epsilon))$
of  $iJ_NM^\infty(\epsilon)$  since this  operator has eigenvalues $(\lambda_1(\epsilon),-\lambda_1(\epsilon),\ldots,\lambda_N(\epsilon),-\lambda_N(\epsilon))$. Let us write
\begin{align*}
iJ_NM^\infty(\epsilon)&=H+V\qquad\textrm{ where }\qquad H=iJ_N\qquad\textrm{ and }\qquad V=i\epsilon J_NM\ .
\end{align*} 
Observe that the eigenvalues of $H$ are $\{1,-1\}$ and $\|V\|=O(\epsilon)$. This means that we can apply standard degenerate perturbation theory: the spectrum $\spec(H+V)$ is given by
\begin{align}
\{\mu+\spec(V|_{\cH_\mu})\ \big|\ \mu\in\spec(H)\}+O(\epsilon^2)\label{eq:firstorderdegenerateperturb}
\end{align}
to first order in $\epsilon$, where $V|_{\cH_\mu}$ denotes the restriction of $V$ to the degenerate eigenspace~$\cH_\mu$ of~$H$ to eigenvalue~$\mu$. An immediate consequence is that
(by assumption on $\|M\|$), the symplectic eigenvalues  of $M^\infty(\epsilon)$ are of the form
$\lambda_j(\epsilon)=1+O(\epsilon)$ for all $j=1,\ldots,N$. 
\end{proof}
To obtain the correct constant (i.e., the first-order correction in~\eqref{eq:firstorderdegenerateperturb}), it is necessary to compute the restriction $V|_{\cH_\mu}$ to the eigenspace with eigenvalue~$\mu$. For this purpose, it is convenient to use a basis of $\cH_\mu$.  For example, for the case of Lemma~\ref{lem:zerothorder}, we can define the vectors
\begin{align*}
\ket{v^+}=\frac{1}{\sqrt{2}}(i,1)=\frac{1}{\sqrt{2}}(i\ket{1}+\ket{2})\qquad\textrm{ and }\qquad 
\ket{v^-}=\frac{1}{\sqrt{2}}(1,i)=\frac{1}{\sqrt{2}}(\ket{1}+i\ket{2})\ 
\end{align*}
in $\mathbb{R}^2$ (Here and below, we will often use $\ket{j}$ to denote standard orthonormal basis vectors in~$\mathbb{R}^{2N}$.) Importantly, these vectors satisfy
\begin{align}
iJ_1 \ket{v^\pm}=\pm \ket{v^\pm}\ ,\label{eq:joneeigenspace}
\end{align}
and 
\begin{align}
\bra{v^\pm} iJ_1 D_1(\alpha)\ket{v^\pm}=\pm\alpha\qquad \bra{v^\mp} iJ_1 D_1(\alpha)\ket{v^\pm}=0\qquad\textrm{ for all }\alpha\in\mathbb{R}\  .\label{eq:donesandwich}
\end{align}
Writing $(0,0)^{\oplus k}=(0,0)\oplus\cdots\oplus (0,0)$ ($k$ summands), we can define
\begin{align*}
\ket{v_{j,N}^{\pm}}=(0,0)^{(j-1)}\oplus \ket{v^{\pm}}\oplus(0,0)^{\oplus(N-j-1)}\ .
\end{align*}
where $\ket{v^{\pm}}$ is at mode~$j$ (of $N$~modes). 

Because of~\eqref{eq:joneeigenspace}, the eigenspace $\cH_1$  of $H$ to eigenvalue $1$ has orthonormal basis $\{\ket{v_{j,N}^+}\}_{j=1,\ldots,N}$. Similarly, the eigenspace $\cH_{-1}$ has orthonormal basis $\{\ket{v_{j,N}^-}\}_{j=1,\ldots,N}$. The restriction of $V$ to $\cH_1$ is given by the matrix $(\bra{v_j^+}V\ket{v_k^+})_{j,k=1,\ldots,N}$, and diagonalizing this matrix gives the first-order corrections to the eigenvalue~$1$ (the degeneracy is lifted). We will omit a more detailed discussion here as we will need a more general version (including a reference system~$B$). However, the proof of the following Lemma proceeds in  this fashion; the key here is to apply perturbation theory to a suitably transformed matrix.
\begin{lemma}[Perturbation of symplectic spectrum to $1$st order]\label{lem:firstorderdegenerate}
Let
\begin{align*}
M_{AB}&=\left(
\begin{matrix}
M_A & L_{AB}\\
L_{AB}^T & M_B
\end{matrix}
\right)
\end{align*}
be the covariance matrix of a state on~$m+n$~modes with respect to the ordering
\begin{align*}
\vec{R}=(Q_1^A,P_1^A,\ldots,Q_m^A,P_m^A,Q_1^B,P_1^B,\ldots,Q_n^B,P_n^B) 
\end{align*}
of modes. 
\begin{enumerate}[(i)]
\item\label{it:infinitetime}
Assume that $\|M_{AB}\|=O(1)$ and $\epsilon\ll 1$. Then
the symplectic eigenvalues~$\vec{\lambda}(\epsilon)=(\lambda_1(\epsilon),\ldots,\lambda_{m+n}(\epsilon))$ 
of the covariance matrix
\begin{align}
M^{\infty}_{AB}(\epsilon)&=
\left(
\begin{matrix}
I_A & 0\\
0& 0
\end{matrix}
\right)+\epsilon M_{AB}=:I_A\oplus 0_B+\epsilon M_{AB}
\end{align}
are of the form
\begin{align*}
\lambda_{j}(\epsilon)=\eta_j(\epsilon)+O(\epsilon^2)\ \textrm{ for }j=1,\ldots,m\quad\textrm{ and }\quad 
\lambda_{m+j}(\epsilon)=\epsilon\nu_j+O(\epsilon^2)\ \textrm{ for }j=1,\ldots,n\ 
\end{align*}
where $\vec{\eta}(\epsilon)=(\eta_1(\epsilon),\ldots,\eta_m(\epsilon))$ 
are the eigenvalues of $I_A+\epsilon M_A$ and 
$\vec{\nu}=(\nu_1,\ldots,\nu_n)$ are the eigenvalues of $M_B$.
\item\label{it:infinitesimaltime}
Let  $\vec{\lambda}=(\lambda_1,\ldots,\lambda_{m+n})$ be the symplectic eigenvalues of $M_{AB}$  and let $S_{AB}$ be the symplectic matrix diagonalizing $M_{AB}$, i.e.,
\begin{align*}
S_{AB}M_{AB}S_{AB}^T&=D_{m+n}(\vec{\lambda})\  .
\end{align*}
 Suppose  $\epsilon\ll \Delta(M_{AB})$. Let
$\vec{\lambda}(\epsilon)=(\lambda_1(\epsilon),\ldots,\lambda_{m+n}(\epsilon))$  be the symplectic eigenvalues of
\begin{align}
M^{0}_{AB}(\epsilon)&=M_{AB}+
\epsilon I_A\oplus 0_B\  .
\end{align}
Then degeneracies are split according to
\begin{align}
|\{\ell\ |\ \lambda_\ell=\lambda\}|=|\{\ell\ |\ 
\lambda_\ell(\epsilon)\in [\lambda-\Delta/2,\lambda+\Delta/2]\}\label{eq:numberofeigenvaluesdeglambda}
\end{align}
where $\ell$ ranges over $\ell \in \{1,\ldots,m+n\}$. Furthermore
\begin{align*}
\sum_{\ell:\lambda_\ell(\epsilon)\in [\lambda-\Delta/2,\lambda+\Delta/2]}\lambda_\ell(\epsilon)
&=\sum_{\ell:\lambda_\ell=\lambda} \left(\lambda+\frac{\epsilon}{2}
\tr  [S_{AB}(I_A\oplus 0_B)S_{AB}^T]^{(\ell)}\right)+O(\epsilon^2)\
\end{align*}
for each $\lambda\in\sspec(M_{AB})$. Here $[Z]^{(\ell)}$ denotes the submatrix corresponding to the $\ell$-th mode, i.e., the $2\times 2$-matrix with entries $([Z]^{(\ell)})_{i,j}=Z_{2\ell-1+i,2\ell-1+j}$, $i,j\in\{0,1\}$. 
\end{enumerate}
\end{lemma}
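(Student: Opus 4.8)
\noindent\emph{Proof plan.} The plan is, in both parts, to reduce the problem to ordinary (degenerate) perturbation theory for the matrix $iJ_{m+n}M(\epsilon)$, following the proof of Lemma~\ref{lem:zerothorder}: since the relevant covariance matrices are positive definite for $\epsilon$ small, $iJ_{m+n}M(\epsilon)$ is diagonalizable with real spectrum equal to $\{\pm\lambda_j(\epsilon)\}$, so I would write $iJ_{m+n}M(\epsilon)=H+V$ with $\|V\|=O(\epsilon)$ and apply the first-order formula~\eqref{eq:firstorderdegenerateperturb} to each eigenvalue cluster of $H$. The one extra linear-algebra ingredient I would record first is the generalization of~\eqref{eq:donesandwich} to non-scalar blocks: for an arbitrary real symmetric $2\times2$ matrix $Z$,
\begin{align*}
\bra{v^+}iJ_1 Z\ket{v^+}=\tfrac12\tr Z\ ,
\end{align*}
which follows by splitting $Z=\tfrac{\tr Z}{2}I_2+Z_0$ with $Z_0$ traceless, using $iJ_1\ket{v^+}=\ket{v^+}$ and $\langle v^+|v^+\rangle=1$ on the first term and checking $\bra{v^+}iJ_1 Z_0\ket{v^+}=0$ on a two-element basis of traceless symmetric matrices.

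For part~(i) I would take $H=iJ_{m+n}(I_A\oplus 0_B)=iJ_m\oplus 0_{2n}$ and $V=\epsilon\,iJ_{m+n}M_{AB}$. The spectrum of $H$ is $\{+1,-1,0\}$, with $\cH_{\pm1}=\myspan\{\ket{v^\pm_{j,m+n}}\}_{j\le m}$ supported entirely on the $A$-modes and $\cH_0$ equal to the $2n$-dimensional $B$-coordinate subspace. Applying~\eqref{eq:firstorderdegenerateperturb}: on $\cH_{\pm1}$ the compression $P_{\pm1}VP_{\pm1}$ only sees the $A$-block of $iJ_{m+n}M_{AB}$, i.e.\ it is $\epsilon$ times the compression of $iJ_mM_A$, which is exactly the first-order perturbation governing the symplectic eigenvalues $\vec\eta(\epsilon)$ of $I_A+\epsilon M_A$; hence the $2m$ eigenvalues near $\pm1$ are $\pm\eta_j(\epsilon)+O(\epsilon^2)$. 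On $\cH_0$ the compression is $\epsilon\,iJ_nM_B$, whose spectrum is $\{\pm\epsilon\nu_j\}$ with $\vec\nu$ the symplectic eigenvalues of the ($O(1)$-norm, positive definite) matrix $M_B$; hence the remaining $2n$ eigenvalues are $\pm\epsilon\nu_j+O(\epsilon^2)$. The three clusters stay disjoint for $\epsilon$ small (as $\epsilon\nu_j\to0$), and I would conclude by reading off the positive members.

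For part~(ii) I would first conjugate by the symplectic $S_{AB}$, which does not change symplectic eigenvalues, so that it suffices to study $\tilde M(\epsilon)=D_{m+n}(\vec\lambda)+\epsilon K$ with $K=S_{AB}(I_A\oplus 0_B)S_{AB}^T$ (real symmetric, positive semidefinite). Now $iJ_{m+n}\tilde M(\epsilon)=H+\epsilon\,iJ_{m+n}K$ with $H=iJ_{m+n}D_{m+n}(\vec\lambda)=\bigoplus_{\ell=1}^{m+n}\lambda_\ell(iJ_1)$, so by~\eqref{eq:joneeigenspace} each distinct $\lambda\in\sspec(M_{AB})$ is an eigenvalue of $H$ of multiplicity $d_\lambda:=|\{\ell:\lambda_\ell=\lambda\}|$ with eigenspace $\cH_\lambda=\myspan\{\ket{v^+_{\ell,m+n}}:\lambda_\ell=\lambda\}$ (there is no collision between $+\lambda$ and some $-\lambda'$ since all symplectic eigenvalues are $\ge1$, and distinct eigenvalues of $H$ differ by $\ge\Delta(M_{AB})$). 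Since $\epsilon\ll\Delta(M_{AB})$, the $d_\lambda$ perturbed eigenvalues emanating from $\cH_\lambda$ remain within $\Delta(M_{AB})/2$ of $\lambda$ while none others enter that window, which is~\eqref{eq:numberofeigenvaluesdeglambda}. For the sum I would use that the sum of the eigenvalues in a separated cluster is analytic in $\epsilon$ with first-order term equal to the trace of the first-order correction operator $V|_{\cH_\lambda}$, so that
\begin{align*}
\sum_{\ell:\,\lambda_\ell(\epsilon)\in[\lambda-\Delta/2,\lambda+\Delta/2]}\lambda_\ell(\epsilon)=d_\lambda\lambda+\epsilon\,\tr\big(P_{\cH_\lambda}\,iJ_{m+n}K\big)+O(\epsilon^2)\ .
\end{align*}
Evaluating the trace on the orthonormal basis $\{\ket{v^+_{\ell,m+n}}:\lambda_\ell=\lambda\}$, noting that the $\ell$-th mode diagonal block of $iJ_{m+n}K$ is $iJ_1[K]^{(\ell)}$, and invoking the $2\times2$ identity above, I get $\tr(P_{\cH_\lambda}\,iJ_{m+n}K)=\sum_{\ell:\lambda_\ell=\lambda}\tfrac12\tr[K]^{(\ell)}$; since $d_\lambda\lambda=\sum_{\ell:\lambda_\ell=\lambda}\lambda$, this is exactly the asserted formula with $[K]^{(\ell)}=[S_{AB}(I_A\oplus 0_B)S_{AB}^T]^{(\ell)}$.

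The hard part is not any single step but making the perturbation-theoretic bookkeeping precise in three places: (a) $iJM$ is not self-adjoint, so one must invoke its diagonalizability with real spectrum (valid since $M>0$ for $\epsilon$ small, as in Lemma~\ref{lem:zerothorder}) to be sure the expansion is regular; (b) in part~(i) the unperturbed matrix vanishes on the entire $B$-block, so the $n$ ``$B$-type'' symplectic eigenvalues are $O(\epsilon)$ rather than $O(1)$ --- but this is precisely what degenerate perturbation theory on the $2n$-dimensional $\cH_0$ delivers; and (c) the block $[K]^{(\ell)}$ is a general symmetric $2\times2$ matrix rather than a scalar multiple of $I_2$, which is exactly why the scalar sandwich identity~\eqref{eq:donesandwich} must be upgraded.
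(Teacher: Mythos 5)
Your proposal is correct and follows essentially the same route as the paper: reduce both parts to first-order degenerate perturbation theory for $iJ_{m+n}M(\epsilon)$ written as $H+V$ with $\|V\|=O(\epsilon)$, treat the eigenvalue clusters separately, and in part~(ii) evaluate the cluster sum via the trace of the compressed perturbation using the identity $\bra{v^+}iJ_1Z\ket{v^+}=\tfrac12\tr Z$ for symmetric $Z$, which is exactly the computation in the paper's Eq.~\eqref{eq:finalexpression}. The only cosmetic difference is in part~(i), where the paper first conjugates by $S_A(\epsilon)\oplus S_B$ so that the restrictions of $V$ to the eigenspaces are already diagonal (with entries $\alpha_j(\epsilon)$ and $\pm\epsilon\nu_j$), whereas you keep $M_{AB}$ untransformed and identify the compressions abstractly as those of $\epsilon\, iJ_mM_A$ and $\epsilon\, iJ_nM_B$; both give the stated conclusion.
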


\subsection*{Proof of Lemma~\ref{lem:firstorderdegenerate}~\eqref{it:infinitetime}}
Our goal is to compute the symplectic eigenvalues~$\lambda_1(\epsilon),\ldots,\lambda_{m+n}(\epsilon)$ 
of the covariance matrix
\begin{align*}
M^\infty_{AB}(\epsilon)&=
I_A\oplus 0_B+\epsilon M_{AB}\  ,
\end{align*}
to first order in~$\epsilon$. 
Let $\vec{\nu}=(\nu_1,\ldots,\nu_m)$ 
be the symplectic eigenvalues of $M_B$, and 
let $S_B$ be the symplectic matrix diagonalizing $M_B$, i.e.,
\begin{align*}
S_B M_B S_B^T&=D_n(\vec{\nu})\ ,
\end{align*}
where $D_n(\vec{\nu})=\diag(\nu_1,\nu_1)\oplus\cdots\oplus\diag(\nu_n,\nu_n)$.
Similarly, let $\vec{\alpha}(\epsilon)=(\alpha_1(\epsilon),\ldots,\alpha_m(\epsilon))$ be
such that the symplectic eigenvalues of $I_A+\epsilon M_A$ are
equal to $(1+\alpha_1(\epsilon),\ldots,1+\alpha_m(\epsilon))$. Observe that
\begin{align}
|\alpha_j(\epsilon)|=O(\epsilon)\qquad\textrm{ for }j=1,\ldots,m\ .\label{eq:mujepsilonperturb}
\end{align}
Let~$S_A(\epsilon)$ be the symplectic matrix diagonalizing $I_A+\epsilon M_A$, i.e., 
\begin{align*}
S_A(\epsilon)(I_A+\epsilon M_A)S_A(\epsilon)^T&=I_A+D_m(\vec{\alpha}(\epsilon))\ .
\end{align*}
Finally, let $S_{AB}(\epsilon)=S_A(\epsilon)\oplus S_B$.  The symplectic spectrum of~$M^\infty_{AB}(\epsilon)$ is identical to that of
\begin{align*}
\hat{M}^\infty_{AB}(\epsilon):=S_{AB}(\epsilon)M^\infty_{AB}(\epsilon)S_{AB}(\epsilon)^T=
\left(
\begin{matrix}
I_A+D_m(\vec{\alpha}(\epsilon)) & \epsilon S_A(\epsilon) L_{AB}S_B^T\\
\epsilon S_BL^T_{AB}S_A(\epsilon)^T & \epsilon D_n(\vec{\nu})
\end{matrix}
\right)\ .
\end{align*}
In particular, the eigenvalues of the matrix~$iJ_{m+n}\hat{M}^\infty_{AB}(\epsilon)$
are given by~\begin{align}
\spec(iJ_{m+n}\hat{M}_{AB}^\infty(\epsilon))=(\lambda_1(\epsilon),-\lambda_1(\epsilon),\ldots,\lambda_{m+n}(\epsilon),-\lambda_{m+n}(\epsilon))\ .\label{eq:symplecticspectr}
\end{align}
Since $J_{m+n}=J_m\oplus J_n$,  we obtain
\begin{align*}
iJ\hat{M}^\infty_{AB}(\epsilon)=H+V\qquad\textrm{ where }H=\left(
\begin{matrix}
iJ_m & 0 \\
0 & 0
\end{matrix}
\right)\ \textrm{ and }\ 
V=
\left(
\begin{matrix}
iJ_m D_m(\vec{\alpha}(\epsilon)) & J_mX(\epsilon)\\
J_nX(\epsilon)^T & i\epsilon J_n D_n(\vec{\nu})
\end{matrix}
\right)\ ,
\end{align*}
where we introduced the abbreviation~$X(\epsilon)=i\epsilon S_A(\epsilon) L_{AB}S_B^T$.  By the assumption~$\|M_{AB}\|=O(1)$, we have $\|X(\epsilon)\|=O(\epsilon)$ and $\|D_n(\vec{\nu})\|=O(1)$. Similarly, we have $\|D_m(\vec{\alpha}(\epsilon))\|=O(\epsilon)$ because of Eq.~\eqref{eq:mujepsilonperturb}. We conclude that $\|V\|=O(\epsilon)$.  On the other hand, the eigenvalues of~$H$ are~$\{0,1,-1\}$, hence we can apply first-order (degenerate) perturbation theory to compute the spectrum of~$iJ\hat{M}^\infty_{AB}(\epsilon)$.

 We consider the eigenspaces separately.
\begin{itemize}
\item  $\mu=0$: Consider the eigenspace~$\cH_{0}$ of $H$ to eigenvalue~$\mu=0$.
It is easy to check that the vectors $\{v^+_{1,B},
v^-_{1,B},\ldots,
v^+_{n,B},
v^-_{n,B}\}$ 
defined by
\begin{align*}
\ket{v_{j,B}^{\pm}}:=0_A\oplus \ket{v_{j,n}^\pm }\ 
\end{align*}
are an orthonormal basis of~$\cH_0$. (Here we write
$0_A$ for the zero-vector $(0,0)^{\oplus m}$.)
Since
\begin{align*}
V\ket{v_{k,B}^{\tau}}&= (J_mX\ket{v_{k,n}^\tau})\oplus (i\epsilon J_nD_n(\nu)\ket{v_{k,n}^\tau})\ ,
\end{align*}
we get the matrix elements
\begin{align*}
\bra{v_{j,B}^{\sigma}}V\ket{v_{k,B}^{\tau}}&=\bra{v_{j,n}^\sigma}i\epsilon J_n D_n(\vec{\nu})\ket{v_{k,n}^\tau}=\delta_{j,k}\delta_{\sigma,\tau}(\sigma\cdot\epsilon \nu_j)\ 
 \end{align*}
according to~\eqref{eq:donesandwich}. In particular, the restriction~$V|_{\cH_0}$ is described by a diagonal matrix with diagonal elements of the form $\{\pm\epsilon \nu_j\}_{j=1}^n$. With~\eqref{eq:firstorderdegenerateperturb}, we have found~$m$ symplectic eigenvalues of the form
\begin{align}
\lambda_{m+j}(\epsilon)=\epsilon \nu_j+O(\epsilon^2)\qquad\textrm{ for }j=1,\ldots,n\ .\label{eq:lambdajepsnu}
\end{align}
(Only the non-negative entries on the diagonal are symplectic eigenvalues according to~\eqref{eq:symplecticspectr}.)
Note that these are simply the eigenvalues of~$M_B$, up to a factor of~$\epsilon$.
\item
$\mu=1$: An orthonormal basis of $\cH_1$ is given by the vectors~$\{\ket{v_{j,A}^+}\}_{j=1}^m$ defined by
\begin{align*}
\ket{v_{j,A}^{+}}:=\ket{v_{j,m}^\pm }\oplus 0_B\ 
\end{align*}
where   $0_B$ stands for $(0,0)^{\oplus n}$.
It is easy to check that
\begin{align*}
\bra{v_{j,A}^{+}} V\ket{v_{k,B}^{+}}=\delta_{j,k}\alpha_j(\epsilon)\ ,
\end{align*}
i.e., the restriction of $V$ to~$\cH_1$ is diagonal when expressed in this basis. In conclusion, we found~$m$ symplectic eigenvalues of the form 
\begin{align}
\lambda_{j}(\epsilon)=1+\alpha_j(\epsilon)+O(\epsilon^2)\qquad\textrm{ for }j=1,\ldots,m\ .\label{eq:lambdamjdef}
\end{align}
By definition of $\alpha_j(\epsilon)$, these are equal to  the symplectic eigenvalues~$\eta_j(\epsilon)=1+\alpha_j(\epsilon)$ of $I_A+\epsilon M_A$ in order~$O(\epsilon)$. 

\item
$\mu=-1$: Here we find an orthonormal basis with vectors~$\ket{v_{j,A}^{+}}:=\ket{v_{j,m}^+}\oplus 0_B$. The restriction of $V$ to~$\cH_{-1}$ is diagonal
with diagonal entries~$(-\alpha_1(\epsilon),\ldots,-\alpha_m(\epsilon))$. The corresponding eigenvalues are the negatives of~\eqref{eq:lambdamjdef}, consistent with~\eqref{eq:symplecticspectr}.
\end{itemize}

\subsection*{Proof of Lemma~\ref{lem:firstorderdegenerate}~\eqref{it:infinitesimaltime}}
Let us briefly recall the definitions involved in the statement. We consider the covariance matrix
\begin{align*}
M^{0}_{AB}(\epsilon)&=M_{AB}+
\epsilon\left(
\begin{matrix}
I_A & 0\\
0& 0
\end{matrix}
\right)=M_{AB}+\epsilon I_A\oplus 0_B\ ,
\end{align*}
where $M_{AB}$  has 
symplectic eigenvalues~$\vec{\lambda}=(\lambda_1,\ldots,\lambda_{m+n})$
and is diagonalized by $S_{AB}$, i.e., 
\begin{align*}
S_{AB}M_{AB}S_{AB}^T=D_{m+n}(\vec{\lambda})\ .
\end{align*}
We assume that $\epsilon\ll \Delta$, where $\Delta=\Delta(M_{AB})$ is the symplectic gap. 
Then $M^0_{AB}(\epsilon)$ has the same symplectic eigenvalues as
\begin{align*}
\hat{M}^0_{AB}(\epsilon)=D_{m+n}(\vec{\lambda})+\epsilon S_{AB}(I_A\oplus 0_B) S_{AB}^T\ .
\end{align*}
In particular, it suffices to find the eigenvalues of
\begin{align*}
iJ_{m+n}\hat{M}^0_{AB}(\epsilon)=iJ_{m+n}D_{m+n}(\vec{\lambda})+i\epsilon J_{m+n}S_{AB}(I_A\oplus 0_B)S_{AB}^T=:H+V\ .
\end{align*}
Because $H:=iJ_{m+n}D_{m+n}(\vec{\lambda})$ has gap $\Delta$, $\|iJ_{m+n}S_{AB}(I_A\oplus 0_B)S_{AB}^T\|=O(1)$ and our assumption $\epsilon\ll \Delta$, we can apply (degenerate) perturbation theory to compute the spectrum of~$iJ\hat{M}^0_{AB}(\epsilon)$.

The operator~$H$ has eigenvectors $(w_1^+,\ldots,w_{m+n}^+)=(v_{1,A}^+,\ldots,v_{m,A}^+,v_{1,B}^+,\ldots,v_{n,B}^+)$ with eigenvalues $\lambda_1,\ldots,\lambda_{m+n}$.
 In particular, for $\lambda\in\sspec(M_{AB})$, the eigenspace~$\cH_\lambda$ of~$H$ is 
\begin{align*}
\cH_\lambda &=\myspan \{\ket{w_\ell^+}\ |\ 1\leq \ell\leq m+n\textrm{ with } \lambda_\ell=\lambda\}\ .
\end{align*}Suppose that
 restriction~$V|_{\cH_{\lambda}}$ of $V:=i\epsilon J_{m+n}S_{AB}(I_A\oplus 0_B) S_{AB}^T$ to~$\cH_{\lambda}$  has eigenvalues~$\theta_1,\ldots,\theta_{\dim\cH_{\lambda}}$. According to
(degenerate) first-order perturbation theory, the matrix $iJ_{m+n}\hat{M}^0_{AB}(\epsilon)$ 
has~$\dim \cH_\lambda$ eigenvalues of the form
\begin{align}
\lambda+\theta_j+O(\epsilon^2)\qquad\textrm{ for }j=1,\ldots,\dim \cH_\lambda\ ,\label{eq:listgammathetaj}
\end{align}
where $\theta_j=O(\epsilon)$. Furthermore, the list~\eqref{eq:listgammathetaj}
includes all eigenvalues~$\lambda_\ell(\epsilon)$ of $iJ_{m+n}\hat{M}^0_{AB}(\epsilon)$ in the interval~$[\lambda-\Delta/2,\lambda+\Delta/2]$ and the number of such eigenvalues is equal to~\eqref{eq:numberofeigenvaluesdeglambda} by the assumption $\epsilon\ll \Delta$. We conclude that
\begin{align}
\sum_{\ell:\lambda_\ell(\epsilon)\in [\lambda-\Delta/2,\lambda+\Delta/2]}\lambda_\ell(\epsilon)
&=(\dim \cH_\lambda)\cdot\lambda+\sum_{j=1}^{\dim \cH_\lambda} \theta_j+O(\epsilon^2)\nonumber\\
&=(\dim \cH_\lambda)\cdot\lambda+\tr(V|_{\cH_\lambda})+O(\epsilon^2)\label{eq:dimHtrace}
\end{align}
Because the restriction~$V|_{\cH}$ can be expressed as
\begin{align*}
V|_{\cH_{\lambda}}&=\sum_{j,k:\lambda_j=\lambda_k=\lambda}
\bra{w^+_j}V\ket{w^+_k}\cdot \ket{w^+_j}\bra{w^+_k}\ ,
\end{align*}
we obtain
\begin{align}
\tr(V|_{\cH_\lambda})&=i\epsilon \sum_{\ell: \lambda_\ell=\lambda}\bra{w_{\ell}^+}J_{m+n}S_{AB}(I_A\oplus 0_B)S_{AB}^T\ket{w_{\ell}^+}\ .\label{eq:tracerestriction}
\end{align} 
Combining~\eqref{eq:tracerestriction} with 
$\dim\cH_\lambda=|\{\ell: \lambda_\ell=\lambda\}|$ and~Eq.~\eqref{eq:dimHtrace} gives
\begin{align}
\sum_{\ell:\lambda_\ell(\epsilon)\in [\lambda-\Delta/2,\lambda+\Delta/2]}\lambda_\ell(\epsilon)
=\sum_{\ell:\lambda_\ell=\lambda}
\left(
\lambda+i\epsilon \bra{w_{\ell}^+}J_{m+n}S_{AB}(I_A\oplus 0_B)S_{AB}^T\ket{w_{\ell}^+}
\right)+O(\epsilon^2)\ .\label{eq:gammadeltamn}
\end{align}
Since $\ket{w_\ell^+}=\frac{1}{\sqrt{2}}(i\ket{2\ell-1}+\ket{2\ell})$
and $J_{m+n}=\sum_{k=1}^{m+n} \ket{2k-1}\bra{2k}-\ket{2k}\bra{2k-1}$,  
we have $\bra{w_\ell^+}J_{m+n}=\frac{1}{\sqrt{2}}(i\bra{2\ell}+\bra{2\ell-1})$, and
this takes the form
\begin{align*}
\bra{w_{\ell}^+}J_{m+n}S_{AB}(I_A\oplus 0_B)S_{AB}^T\ket{w_{\ell}^+}
=\frac{1}{2} \left(iZ_{2\ell,2\ell-1}-iZ_{2\ell-1,2\ell}+Z_{2\ell,2\ell}+Z_{2\ell-1,2\ell-1}\right)\
\end{align*}
where $Z=S_{AB}(I_A\oplus 0_B)S_{AB}^T$. 
Since $Z^T=Z$ is symmetric, this is equal to
\begin{align}
\bra{w_{\ell}^+}J_{m+n}S_{AB}(I_A\oplus 0_B)S_{AB}^T\ket{w_{\ell}^+}=\frac{1}{2} \tr [S_{AB}(I_A\oplus 0_B)S_{AB}^T]^{(\ell)}\ ,\label{eq:finalexpression}
\end{align}
and the claim follows by inserting~\eqref{eq:finalexpression} into~\eqref{eq:gammadeltamn}. 
\section{Conditional entropy and diffusion\label{sec:condentropydiff}}

A central tool in the proof of the quantum entropy power inequality~\cite{KoeGrae} is the
one-parameter semigroup~$\{e^{t\cL}\}_{t\geq 0}$ of completely positive trace-preserving maps generated by the `diffusion' Liouvillean~$\cL$ (see~\cite{KoeGrae} for a definition of the latter). 
For all $t\geq 0$, the map $e^{t\cL}$ is Gaussian and can therefore be defined in terms of its action on the covariance matrix~$M$ and the displacement vector~$\vec{d}$: A Gaussian state
$\rho$ described by $(M,\vec{d})$ is transformed into a Gaussian state $\rho(t)=e^{t\cL}(\rho)$ with covariance matrix $(M(t)=M+tI,\vec{d})$, i.e., the
transformation  governing the evolution for time~$t$ is
\begin{align*}
\begin{matrix}
\rho & \mapsto &  e^{t\cL}(\rho)\\
(M,\vec{d})&\overset{e^{t\cL}}{\mapsto} &(M+tI,\vec{d})
\end{matrix}\qquad\textrm{ for all covariance matrices }M\textrm{ and displacement vectors }\vec{d}\ .
\end{align*}
In this section, we revisit and extend statements of~\cite{KoeGrae} about the behavior of the entropy~$S(e^{t\cL}(\rho))$ as a function of time~$t$. In particular,  we specialize to Gaussian initial states~$\rho$ and extend our considerations to conditional entropies. 

More precisely, we  consider the case where diffusion acts only on a subset of modes. Concretely, assume that our system is bipartite, with system~$A$ consisting of~$m$ modes, and system~$B$ consisting of $n$~modes. Diffusion for time~$t$ acting on the modes in~$A$ only is described by the superoperator~$e^{t\cL_A}\otimes I_B$ (where $I_B$ is the identity superoperator on $B$). In particular, this family of superoperators is 
specified by the transformation 
\begin{align}
\begin{matrix}
\rho_{AB} & \mapsto & (e^{t\cL_A}\otimes I_B)(\rho_{AB})=:\rho_{AB}(t)\\
(M_{AB},\vec{d}_{AB})&\overset{e^{t\cL_A}\otimes I_B}{\mapsto} &(M_{AB}+t(I_A\oplus 0_B),\vec{d}_{AB})
\end{matrix}\label{eq:diffusionprocessdef}
\end{align}
 for all covariance matrices $M_{AB}\in\Mat_{2(m+n)}(\mathbb{R})$ and 
displacement vectors~$\vec{d}_{AB}\in\mathbb{R}^{2(m+n)}$. We will examine
the conditional entropy~$S(A|B)_{\rho_{AB}(t)}$ for the evolved state
$\rho_{AB}(t)$, given some Gaussian initial state~$\rho_{AB}=\rho_{AB}(0)$. In Section~\ref{sec:infinitetimelimit}, we  show that $S(A|B)_{\rho_{AB}(t)}$ scales as a universal function for $t\rightarrow\infty$ (independent of the initial state~$\rho_{AB}$). In Section~\ref{sec:entropyincreasedef}, we derive an explicit expression for the infinitesimal rate of change of this quantity in terms of the covariance matrix of~$\rho_{AB}$.

\subsection{Scaling of the conditional entropy in the infinite-time limit\label{sec:infinitetimelimit}}
In the infinite-time-limit, the entropy 
of the time-evolved state~$e^{t\cL}(\rho)$ scales as a universal function of time~$t$ which is independent of the initial state~$\rho$. This statement was shown for general states in~\cite[Corollary 3.4]{KoeGrae}; here we give a simple argument for Gaussian states for completeness. We will also need this statement in the proof of Lemma~\ref{lem:conditionalentropyinfinitescaling} which deals with conditional entropies. 
\begin{lemma}[Scaling of entropy in the infinite-time limit under diffusion]
Let $\rho$ be an (arbitrary) Gaussian state of $N$ modes.\label{lem:unconditionalinfinitescaling}
Then
\begin{align*}
\lim_{t\rightarrow\infty}|S(e^{t\cL}(\rho))-N\cdot g((t-1)/2)|=0\ .
\end{align*}
\end{lemma}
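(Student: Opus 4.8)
The plan is to diagonalize the covariance matrix $M$ of $\rho$ symplectically and then track what diffusion does to the symplectic eigenvalues at large $t$. Let $S$ be a symplectic matrix with $SMS^T = D_N(\vec{\lambda})$, so that $\rho$ has symplectic eigenvalues $\vec{\lambda} = (\lambda_1,\ldots,\lambda_N)$. Since symplectic unitaries $U_S$ do not change the entropy, and since $S(e^{t\cL}(\rho))$ depends only on the covariance matrix $M + tI$, it suffices to analyze the symplectic eigenvalues of $M + tI$. These coincide with the symplectic eigenvalues of $S(M+tI)S^T = D_N(\vec{\lambda}) + t\, SS^T$. Writing $t = 1/\epsilon$ and factoring out $t$, the symplectic eigenvalues of $M + tI$ equal $t$ times the symplectic eigenvalues of $\epsilon D_N(\vec{\lambda}) + SS^T$, i.e., of $I + \epsilon(D_N(\vec{\lambda}) + (SS^T - I))$ — so I would apply Lemma~\ref{lem:zerothorder} (the $0$th-order perturbation result) with the bounded matrix $M' = D_N(\vec{\lambda}) + SS^T - I$ to conclude that the symplectic eigenvalues of $M + tI$ are of the form $t(1 + O(1/t)) = t + O(1)$ as $t \to \infty$.

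Next I would feed this into the entropy continuity estimate, Lemma~\ref{lem:entropycontinuity}. Let $\sigma$ be the Gaussian state whose covariance matrix is $(t-1)I$ (or better, some reference state with all symplectic eigenvalues exactly $t-1$), which has entropy precisely $N \cdot g(\photonnumber(t-1)) = N\cdot g((t-1)/2)$ by~\eqref{eq:entropydefinition} and~\eqref{eq:meannumbersymplectic}. Actually it is cleaner to directly estimate: the symplectic eigenvalues $\nu_j(t)$ of $e^{t\cL}(\rho)$ satisfy $|\nu_j(t) - (t-1)| = O(1)$ (a constant independent of $t$), and $\lambda_* := \min_j (t-1) = t-1 \to \infty$. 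Applying Lemma~\ref{lem:entropycontinuity} with this $\vec{\nu}(t)$ and $\vec{\lambda} = (t-1,\ldots,t-1)$ gives
\begin{align*}
|S(e^{t\cL}(\rho)) - N g((t-1)/2)| \leq \frac{N}{2}\left( O(1)\cdot \beta(t-1) + \frac{O(1)}{(t-1)^2 - 1}\right)\ .
\end{align*}
Since $\beta(\lambda) \to 0$ as $\lambda \to \infty$ (as noted after~\eqref{eq:inversetemperaturedef}) and $(t-1)^2 - 1 \to \infty$, the right-hand side tends to $0$, which is exactly the claim.

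The main obstacle — really the only nontrivial point — is justifying that the deviation $|\nu_j(t) - (t-1)|$ stays bounded by a constant uniform in $t$, rather than merely $o(t)$; the crude bound $\lambda_j(\epsilon) = 1 + O(\epsilon)$ from Lemma~\ref{lem:zerothorder} gives $\nu_j(t) = t + O(1)$ with the $O(1)$ constant controlled by $\|M'\| = \|D_N(\vec{\lambda}) + SS^T - I\|$, which is a fixed finite number once $\rho$ is fixed. One should double-check that the implicit constant in the perturbation bound genuinely does not grow with $t$: here $H = iJ_N$ has a fixed gap $2$ independent of $t$, and the perturbation $V = i\epsilon J_N M'$ has norm $O(\epsilon)$ with the same fixed $\|M'\|$, so the first-order perturbation expansion~\eqref{eq:firstorderdegenerateperturb} is valid with $t$-independent constants for all $t$ large enough. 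Everything else is a direct substitution into Lemma~\ref{lem:entropycontinuity} and taking $t \to \infty$.
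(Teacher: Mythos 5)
Your strategy is the same as the paper's: use Lemma~\ref{lem:zerothorder} to show the symplectic eigenvalues of $M+tI$ are $t+O(1)$ with a constant depending only on $\rho$, then compare with an isotropic reference state via Lemma~\ref{lem:entropycontinuity}. Two slips in the write-up need fixing, though. First, the reduction through the symplectic diagonalization is algebraically wrong: $\epsilon D_N(\vec{\lambda})+SS^T$ is \emph{not} equal to $I+\epsilon\bigl(D_N(\vec{\lambda})+SS^T-I\bigr)$, since $SS^T-I$ (nonzero unless $S$ happens to be orthogonal) carries no factor of $\epsilon$, so Lemma~\ref{lem:zerothorder} cannot be applied to the matrix in the form you state. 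The detour is unnecessary: write $M+tI=t\,(I+\tfrac{1}{t}M)=t\,M^\infty(1/t)$ and apply Lemma~\ref{lem:zerothorder} with the bounded matrix $M$ itself (as the paper does), which gives $\nu_j(t)=t+O(1)$ directly; equivalently, conjugating $I+\epsilon M$ by $S$ produces $SS^T+\epsilon D_N(\vec{\lambda})$, not your expression. Second, your reference state is off by one: a Gaussian state with covariance matrix $(t-1)I$ has symplectic eigenvalues $t-1$ and hence entropy $N\,g(\photonnumber(t-1))=N\,g((t-2)/2)$, not $N\,g((t-1)/2)$, because $\photonnumber(\lambda)=(\lambda-1)/2$. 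Take instead the state $\sigma(t)$ with covariance matrix $tI$ (all symplectic eigenvalues equal to $t$), whose entropy is exactly $N\,g((t-1)/2)$; your application of Lemma~\ref{lem:entropycontinuity} then goes through verbatim with $\lambda_*=t$ and yields the claim (alternatively, keep $(t-1)I$ and add the extra observation that $g((t-2)/2)-g((t-1)/2)\to 0$). Your remark that the implicit constant in the perturbation bound is $t$-independent is correct and is indeed the point that makes the argument work.
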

\begin{proof}
Let $M$ be the covariance matrix of $\rho$. The covariance matrix of $\rho(t)=e^{t\cL}(\rho)$
has the form  $M+tI=tM^\infty(1/t)$. Therefore, the symplectic eigenvalues 
are  of the form $\nu_j=t(1+O(1/t))=t+O(1)$ according to Lemma~\ref{lem:zerothorder}. 
For $t\geq 1$, the matrix $tI$ is a valid covariance matrix with symplectic eigenvalues $\lambda_j=t$ for $j=1,\ldots,N$. Let~$\sigma(t)$ be the centered Gaussian state with covariance matrix~$tI$. 
Lemma~\ref{lem:entropycontinuity} applied to $\rho(t)$ and $\sigma(t)$ gives
\begin{align*}
|S(\rho(t))-S(\sigma(t))|\leq \frac{N}{2} O(1) \left(\beta(t)+\frac{1}{t^2-1}\right)\rightarrow 0\qquad\textrm{ for }t\rightarrow\infty\ .
\end{align*}
Since~$S(\sigma(t))=N\cdot g(\photonnumber(t))=N\cdot g((t-1)/2)$, the claim follows.
\end{proof}

A similar statement holds for conditional entropies:
\begin{lemma}[Scaling of conditional entropy in the infinite-time limit under diffusion]\label{lem:conditionalentropyinfinitescaling}
Let $\rho_{AB}$ be a Gaussian state of $(m+n)$~modes~$A$ and~$B$.
Define $\rho_{AB}(t)=(e^{t\cL_A}\otimes I_B)(\rho_{AB})$. 
Then 
\begin{align*}
\lim_{t\rightarrow \infty}|S(A|B)_{\rho_{AB}(t)}-m\cdot g((t-1)/2)|=0\ .
\end{align*}
\end{lemma}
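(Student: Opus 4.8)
The plan is to reduce the conditional-entropy statement to the unconditional Lemma~\ref{lem:unconditionalinfinitescaling} together with the perturbative information from Lemma~\ref{lem:firstorderdegenerate}~\eqref{it:infinitetime}. Write $S(A|B)_{\rho_{AB}(t)} = S(AB)_{\rho_{AB}(t)} - S(B)_{\rho_{AB}(t)}$. The state on $B$ alone is unaffected by $e^{t\cL_A}\otimes I_B$, so $S(B)_{\rho_{AB}(t)} = S(B)_{\rho_{AB}}$ is a constant independent of $t$; call it $S(B)_{\rho_{AB}} = \sum_{j=1}^n g(\photonnumber(\nu_j))$ where $\vec{\nu}=(\nu_1,\ldots,\nu_n)$ are the symplectic eigenvalues of $M_B$. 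Thus it suffices to understand the joint entropy $S(AB)_{\rho_{AB}(t)}$ in the limit $t\to\infty$.

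Next I would analyze the symplectic eigenvalues of the covariance matrix $M_{AB}(t) = M_{AB} + t(I_A\oplus 0_B)$ of $\rho_{AB}(t)$. Writing $\epsilon = 1/t$, we have $M_{AB}(t) = t\big(I_A\oplus 0_B + \epsilon M_{AB}\big) = t\, M^\infty_{AB}(\epsilon)$, and since scaling a covariance matrix by $t$ scales all symplectic eigenvalues by $t$, Lemma~\ref{lem:firstorderdegenerate}~\eqref{it:infinitetime} applies: the $m+n$ symplectic eigenvalues of $M_{AB}(t)$ are $t\lambda_j(\epsilon)$, with $\lambda_j(\epsilon) = \eta_j(\epsilon) + O(\epsilon^2)$ for $j=1,\ldots,m$ (where $\eta_j(\epsilon)$ are the symplectic eigenvalues of $I_A + \epsilon M_A$, so $\eta_j(\epsilon) = 1 + O(\epsilon)$, giving eigenvalues $t + O(1)$), and $\lambda_{m+j}(\epsilon) = \epsilon\nu_j + O(\epsilon^2)$ for $j=1,\ldots,n$, so the corresponding symplectic eigenvalues of $M_{AB}(t)$ are $\nu_j + O(1/t)$. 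Intuitively: the $m$ diffusing modes have eigenvalues growing like $t$, while the $n$ non-diffusing modes retain eigenvalues near $\vec{\nu}$. Then $S(AB)_{\rho_{AB}(t)} = \sum_{j=1}^m g(\photonnumber(t\lambda_j(\epsilon))) + \sum_{j=1}^n g(\photonnumber(\nu_j + O(1/t)))$. For the first sum, I would apply Lemma~\ref{lem:entropycontinuity} comparing $\rho_{AB}(t)$ against an auxiliary state, or more directly invoke Lemma~\ref{lem:unconditionalinfinitescaling} applied to the $m$-mode marginal behavior, to get $\sum_{j=1}^m g(\photonnumber(t\lambda_j(\epsilon))) = m\cdot g((t-1)/2) + o(1)$ as $t\to\infty$. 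For the second sum, continuity of $g\circ\photonnumber$ on a compact set bounded away from the pure-state locus gives $\sum_{j=1}^n g(\photonnumber(\nu_j + O(1/t))) \to \sum_{j=1}^n g(\photonnumber(\nu_j)) = S(B)_{\rho_{AB}}$.

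Combining, $S(A|B)_{\rho_{AB}(t)} = S(AB)_{\rho_{AB}(t)} - S(B)_{\rho_{AB}} = m\cdot g((t-1)/2) + S(B)_{\rho_{AB}} + o(1) - S(B)_{\rho_{AB}} = m\cdot g((t-1)/2) + o(1)$, which is the claim. The main obstacle is making the error control in the first sum rigorous: one must be sure that the $O(\epsilon^2)$ corrections to the large symplectic eigenvalues, after multiplication by $t = 1/\epsilon$, still produce eigenvalues of the form $t + O(1)$ (indeed $t\lambda_j(1/t) = t\eta_j(1/t) + O(1/t) = t + O(1)$), and then that Lemma~\ref{lem:entropycontinuity}'s bound — whose $\beta(\lambda_*)$ and $1/(\lambda_*^2-1)$ factors vanish as $\lambda_* \sim t \to\infty$ — kills the discrepancy between $\sum_j g(\photonnumber(t\lambda_j(1/t)))$ and $m\cdot g((t-1)/2)$. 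A minor technical point is handling the degenerate case where $M_B$ (or $M_{AB}$) has symplectic eigenvalues equal to $1$, i.e.\ pure non-diffusing eigenmodes; since such modes contribute $0$ to both $S(AB)$ and $S(B)$ and are untouched by the diffusion, they can be split off at the outset, so we may assume $\lambda_* > 1$ when invoking Lemma~\ref{lem:entropycontinuity}.
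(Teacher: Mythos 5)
Your proposal is correct and follows essentially the same route as the paper: apply Lemma~\ref{lem:firstorderdegenerate}~\eqref{it:infinitetime} with $\epsilon=1/t$ to split the symplectic spectrum of $M_{AB}+t(I_A\oplus 0_B)$ into $m$ eigenvalues of size $t+O(1)$ and $n$ eigenvalues $\nu_j+O(1/t)$, control the errors via Lemma~\ref{lem:entropycontinuity} together with Lemma~\ref{lem:unconditionalinfinitescaling}, and dispose of pure $B$-eigenmodes ($\nu_j=1$) by factoring them out. The only cosmetic difference is that you write $S(A|B)=S(AB)-S(B)$ with $S(B)$ constant, whereas the paper compares $\rho_{AB}(t)$ to the product state $e^{t\cL_A}(\rho_A)\otimes\rho_B$ with the same $B$-marginal; these amount to the same argument.
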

\noindent Comparing Lemma~\ref{lem:conditionalentropyinfinitescaling} with Lemma~\ref{lem:unconditionalinfinitescaling} suggests that $(e^{t\cL_A}\otimes I_B)(\rho_{AB})$ approaches a product state for large times. Quantifying this convergence  for general (possibly non-Gaussian) states may  provide an avenue to proving conjecture~\ref{eq:qconditionalentropyinequality}.

\begin{proof}
Let $\vec{\nu}=(\nu_1,\ldots,\nu)$ be the symplectic eigenvalues of the covariance matrix~$M_B$. Note that we can assume without loss of generality that
\begin{align}
\nu_j\neq 1\qquad\textrm{ for all }j=1,\ldots,n\ .\label{eq:nondegenernuass}
\end{align}
Indeed, if we had $\nu_j=1$ for some~$j\in\{1,\ldots,n\}$, then the corresponding eigenmode in~$B$ is in a pure state and the state $\rho_{AB}$ factorizes. This is true also for the time-evolved state~$\rho_{AB}(t)$ (because $B$ is unaffected by the evolution), hence such eigenmodes do not contribute to the entropy~$S(A|B)_{\rho(t)}$ and can be traced out.

The covariance matrix of the state $\rho_{AB}(t)$ can be written in the form
\begin{align*}
M_{AB}+tI_A\oplus 0_B=t(I_A\oplus 0_B+\textfrac{1}{t}\cdot M_{AB})=:t M^\infty_{AB}(1/t)\ .
\end{align*}
Applying Lemma~\ref{lem:firstorderdegenerate}~\eqref{it:infinitetime} with $\epsilon=1/t$ (and multiplying the resulting eigenvalues by~$t$), we conclude that its symplectic eigenvalues are 
\begin{align*}
\lambda_{j}=\eta_j(t)+O(1/t)\ \textrm{ for } j=1,\ldots,m\ \textrm{ and }\ 
\lambda_{m+j}= \nu_j+O(1/t)\ \textrm{ for } j=1,\ldots,n\ , 
\end{align*}
where $\{\eta_j(t)\}$ are the symplectic eigenvalues of $tI_A+M_A$. This means that
up to order $O(1/t)$, the symplectic spectrum of $\rho_{AB}(t)$ 
is identical to the one associated with the product state $e^{t\cL}(\rho_A)\otimes \rho_B$. In summary, we have the list of symplectic eigenvalues
\begin{center}
\begin{tabular}{c|c|c}
$\rho_{AB}(t)$   & $e^{t\cL_A}(\rho_A)$ & $\rho_B$\\ \hline
$(\underbrace{\vec{\eta}(t)+O(1/t)}_{=:\tilde{\eta}}, \underbrace{\vec{\nu}(t)+O(1/t)}_{=:\tilde{\nu}})$ & $\vec{\eta}(t)$ & $\vec{\nu}(t)$
\end{tabular}
\end{center}

Let $\tilde{\rho}_A$ be a Gaussian state of $m$ modes with
symplectic spectrum $\tilde{\eta}$, and let $\tilde{\rho}_B$ be a Gaussian state of $n$ modes with symplectic spectrum~$\tilde{\nu}$. Then
\begin{align}
S(\rho_{AB}(t))=S(\tilde{\rho}_A)+S(\tilde{\rho}_B)\ .\label{eq:entropyrhotone}
\end{align}
We can apply Lemma~\ref{lem:entropycontinuity}
to the states $\tilde{\rho}_A$ and $e^{t\cL_A}(\rho_A)$.  Since the latter has 
symplectic eigenvalues $\vec{\eta}(t)=(\eta_1(t),\ldots,\eta_m(t))$, we get
\begin{align*}
|S(\tilde{\rho}_A)-S(e^{t\cL}(\rho_A))|\leq 
\frac{m}{2}\left(O(1/t)\beta(\eta_*(t))+O(1/t^2)\frac{1}{\eta_*(t)^2-1}\right)\rightarrow 0\qquad \textrm{ as }t\rightarrow \infty\ .
\end{align*}
Here we used the fact that $\eta_j(t)=t+O(1)$ for all $1\leq j\leq m$, hence $\eta_*(t)=\min_j \eta_j(t)\rightarrow\infty$ for $t\rightarrow\infty$. Similarly, applying Lemma~\ref{lem:entropycontinuity}
to the states $\tilde{\rho}_B$ and $\rho_B$  (and using assumption~\eqref{eq:nondegenernuass})
gives
\begin{align*}
|S(\tilde{\rho}_B)-S(\rho_B)|\leq \frac{n}{2}\left(
O(1/t)\cdot\beta(\nu_*)+O(1/t^2) \frac{1}{\min_j |\nu_j^2-1|}\right)\rightarrow 0\qquad\textrm{ for }t\rightarrow\infty\ .
\end{align*}
Inserting these upper bounds into~\eqref{eq:entropyrhotone} and using the triangle inequality then gives
\begin{align*}
|S((e^{t\cL_A}\otimes I_B)(\rho_{AB}))-S(e^{t\cL_A}(\rho_A)\otimes \rho_B)|\rightarrow 0\qquad\textrm{ for }t\rightarrow\infty\ .
\end{align*}
Because the states $\rho_{AB}(t)=(e^{t\cL_A}\otimes I_B)(\rho_{AB})$
and $\sigma_{AB}(t)=e^{t\cL_A}(\rho_A)\otimes \rho_B$ have the same reduced density operator $\rho_B(t)=\sigma_B(t)=\rho_B$, the previous statement implies that the difference between their conditional entropies also vanishes in the limit, that is, 
\begin{align}
|S(A|B)_{\rho_{AB}(t)}-S(A|B)_{\sigma(t)}|\rightarrow 0\qquad\textrm{ for }t\rightarrow\infty\ .\label{eq:sabrhot}
\end{align}
Since $\sigma_{AB}(t)$ is a product state, we have 
\begin{align}
S(A|B)_{\sigma(t)}=S(\sigma_A(t))\ .\label{eq:condentropyequal}
\end{align} The claim then follows from the triangle inequality, i.e.,
\begin{align*}
|S(A|B)_{\rho_{AB}(t)}-m\cdot g((t-1)/2)|\leq |S(A|B)_{\rho_{AB}(t)}-S(\sigma_A(t))|+|S(\sigma_A(t))-m\cdot g((t-1)/2)|
\end{align*}
because the first term on the rhs.~goes to~$0$ for $t\rightarrow\infty $ according to~\eqref{eq:sabrhot} and~\eqref{eq:condentropyequal}, whereas the second term goes to~$0$ according to Lemma~\ref{lem:unconditionalinfinitescaling}.
\end{proof}

\subsection{Rate of increase of the conditional entropy  \label{sec:entropyincreasedef}}
Next we compute the infinitesimal  rate  of increase of the conditional entropy under the process~\eqref{eq:diffusionprocessdef}.
\begin{lemma}[Rate of conditional entropy increase under diffusion]\label{lem:entropyincreasediff}
Consider bipartite system~$AB$ of  $m+n$ modes. 
Let $\rho_{AB}$ be a Gaussian quantum state
whose covariance matrix $M_{AB}$ has symplectic eigenvalues~$\vec{\lambda}=(\lambda_1,\ldots,\lambda_{m+n})$. Let $S_{AB}$ be a symplectic matrix such that  $S_{AB}M_{AB}S_{AB}^T=D_{m+n}(\vec{\lambda})$. Define~$\rho_{AB}(t)=(e^{t\cL_A}\otimes I_B)(\rho_{AB})$. 
Then 
\begin{align*}
\frac{d}{dt}\Big|_{t=0}S(A|B)_{\rho_{AB}(t)}
&=\frac{1}{4}\sum_{\ell=1}^{m+n}\beta(\lambda_\ell) \tr\left([S_{AB}(I_A\oplus 0_B) S_{AB}^T]^{(\ell)}\right)\ ,
\end{align*}
where $[Z]^{(\ell)}$ is the $2\times 2$~submatrix corresponding to the $\ell$-th mode, $([Z]^{\ell})_{i,j}=Z_{2\ell-1+i,2\ell-1+j}$ for $i,j\in\{0,1\}$. 
\end{lemma}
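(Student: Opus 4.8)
The plan is to reduce the conditional-entropy rate to the rate of the \emph{total} entropy and then to invoke the first-order perturbative formula of Lemma~\ref{lem:firstorderdegenerate}~\eqref{it:infinitesimaltime}. First I would observe that the process~\eqref{eq:diffusionprocessdef} acts as the identity superoperator on~$B$, so the reduced state~$\rho_B(t)=\rho_B$ does not depend on~$t$ and $S(B)_{\rho_{AB}(t)}$ is constant. Hence
\begin{align*}
\frac{d}{dt}\Big|_{t=0}S(A|B)_{\rho_{AB}(t)}=\frac{d}{dt}\Big|_{t=0}\big(S(AB)_{\rho_{AB}(t)}-S(B)_{\rho_{AB}}\big)=\frac{d}{dt}\Big|_{t=0}S(AB)_{\rho_{AB}(t)}\ ,
\end{align*}
so it suffices to compute the infinitesimal rate of increase of the total von Neumann entropy of the Gaussian state with covariance matrix $M_{AB}+t(I_A\oplus 0_B)$, which is precisely the matrix $M^0_{AB}(\epsilon)$ of Lemma~\ref{lem:firstorderdegenerate}~\eqref{it:infinitesimaltime} with $\epsilon=t$.

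Next I would express the entropy through the function $G(\lambda)=g(\photonnumber(\lambda))$ introduced in the proof of Lemma~\ref{lem:entropycontinuity}, for which $G'(\lambda)=\beta(\lambda)/2$ and $G$ is smooth for $\lambda>1$. I would first discard any eigenmode with symplectic eigenvalue~$1$: exactly as in the proof of Lemma~\ref{lem:conditionalentropyinfinitescaling}, such a mode lies entirely in~$B$, factors out of~$\rho_{AB}$ and of~$\rho_{AB}(t)$, and contributes nothing to $S(AB)$ or $S(B)$, so we may assume every $\lambda_\ell>1$. Writing $\lambda_1(t),\ldots,\lambda_{m+n}(t)$ for the symplectic eigenvalues of $M_{AB}+t(I_A\oplus 0_B)$ with $t\ll\Delta:=\Delta(M_{AB})$, we have $S(AB)_{\rho_{AB}(t)}=\sum_{\ell=1}^{m+n}G(\lambda_\ell(t))$. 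The delicate point is that in the presence of degeneracy the individual branches $\lambda_\ell(t)$ need not be differentiable at $t=0$; but the entropy depends only on the multiset of eigenvalues, so I would group the sum by the distinct values $\lambda\in\sspec(M_{AB})$, putting into the $\lambda$-group those branches with $\lambda_\ell(t)\in[\lambda-\Delta/2,\lambda+\Delta/2]$. By Lemma~\ref{lem:firstorderdegenerate}~\eqref{it:infinitesimaltime} these are exactly the branches emanating from $\lambda$, each of the form $\lambda_\ell(t)=\lambda+\theta_\ell t+O(t^2)$ with $\theta_\ell=O(1)$ (in particular $\lambda_\ell(t)>1$ for $t$ small); Taylor-expanding each $G(\lambda_\ell(t))$ about $\lambda$ and using that $G(\lambda),G'(\lambda)$ are common to the whole group yields
\begin{align*}
\sum_{\ell:\,\lambda_\ell(t)\in[\lambda-\Delta/2,\lambda+\Delta/2]}G(\lambda_\ell(t))
=|\{\ell:\lambda_\ell=\lambda\}|\,G(\lambda)+G'(\lambda)\,t\sum_{\ell:\,\lambda_\ell(t)\in[\lambda-\Delta/2,\lambda+\Delta/2]}\theta_\ell+O(t^2)\ ,
\end{align*}
and the second identity of Lemma~\ref{lem:firstorderdegenerate}~\eqref{it:infinitesimaltime} evaluates $\sum_{\ell}\theta_\ell=\tfrac{1}{2}\sum_{\ell:\lambda_\ell=\lambda}\tr[S_{AB}(I_A\oplus 0_B)S_{AB}^T]^{(\ell)}$.

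Finally I would differentiate at $t=0$, insert $G'(\lambda)=\beta(\lambda)/2$, and sum over all $\lambda\in\sspec(M_{AB})$:
\begin{align*}
\frac{d}{dt}\Big|_{t=0}S(AB)_{\rho_{AB}(t)}
=\sum_{\lambda\in\sspec(M_{AB})}\frac{\beta(\lambda)}{4}\sum_{\ell:\lambda_\ell=\lambda}\tr\big([S_{AB}(I_A\oplus 0_B)S_{AB}^T]^{(\ell)}\big)
=\frac{1}{4}\sum_{\ell=1}^{m+n}\beta(\lambda_\ell)\,\tr\big([S_{AB}(I_A\oplus 0_B)S_{AB}^T]^{(\ell)}\big)\ ,
\end{align*}
which combined with the first paragraph is the claimed identity. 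The main obstacle is exactly the non-smoothness of the individual symplectic eigenvalues under degeneracy; Lemma~\ref{lem:firstorderdegenerate}~\eqref{it:infinitesimaltime} is tailored to bypass it, and the point that makes it sufficient here is that $G$ and $G'$ are constant within each degenerate cluster, so only the cluster-wise sum of first-order corrections — which the Lemma supplies — feeds into the entropy derivative.
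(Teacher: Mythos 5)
Your argument is essentially the paper's own proof: reduce to $\frac{d}{dt}S(AB)_{\rho_{AB}(t)}$ using the time-invariance of $\rho_B$, Taylor-expand $G(\lambda)=g(\photonnumber(\lambda))$ with $G'(\lambda)=\beta(\lambda)/2$ cluster-by-cluster over $\sspec(M_{AB})$, and feed in the cluster-wise first-order sums from Lemma~\ref{lem:firstorderdegenerate}~\eqref{it:infinitesimaltime}, with the correct observation that only these sums (not individual eigenvalue branches) are needed. The one inaccuracy is your side remark that a symplectic eigenvalue of $M_{AB}$ equal to $1$ can be discarded because the corresponding eigenmode ``lies entirely in $B$'': that conflates the symplectic spectrum of $M_{AB}$ with that of $M_B$ (the assumption used in Lemma~\ref{lem:conditionalentropyinfinitescaling}), and a pure eigenmode of the joint covariance matrix generally mixes $A$ and $B$ (e.g.\ a two-mode squeezed state); this degenerate case, where $\beta(1)=\infty$ and the derivative need not exist finitely, is simply excluded implicitly here, as in the paper's proof.
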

\begin{proof}
Because the reduced density operator $\tr_A\rho_{AB}(t)=\rho_B$ is independent of time, we have
\begin{align}
\frac{d}{dt}\Big|_{t=0}S(A|B)_{\rho_{AB}(t)}=\frac{d}{dt}\Big|_{t=0}S(AB)_{\rho_{AB}(t)}\ .\label{eq:evolvedsabt}
\end{align}
To evaluate the rate of change of $S(AB)_{\rho_{AB}(t)}$, let $(\lambda_1(t),\ldots,\lambda_{m+n}(t))$ be the symplectic eigenvalues of~$\rho_{AB}(t)$.  According to expression~\eqref{eq:entropydefinition} for the entropy, we have
\begin{align}
S\left(\rho_{AB}(t)\right)&=\sum_{j=1}^{m+n}g(\photonnumber(\lambda_j(t)))=\sum_{\lambda\in\sspec(M_{AB})} \sum_{\ell: \lambda_\ell(t)\in [\lambda-\frac{\Delta}{2},\lambda+\frac{\Delta}{2}]} g(\photonnumber(\lambda_\ell(t)))\ \label{eq:Srhoabexpl}
\end{align}
where we reexpressed the summation using the symplectic gap $\Delta=\Delta(M_{AB})$. 
With the expression~\eqref{eq:meannumbersymplectic}  for the mean photon number, Eq.~\eqref{eq:gprimeNgamma} and  the chain rule for differentiation, we have
\begin{align*}
g(\photonnumber(\lambda(t))=g(\photonnumber(\lambda(0)))+\frac{t\beta(\lambda(0))}{2}\cdot\lambda'(0)+O(t^2)\ 
\end{align*}
Observe that~$\rho_{AB}(t)$ has covariance matrix~$M^0_{AB}(t)$ and we can restrict our attention to times~$t\ll\Delta(M_{AB})$ without loss of generality. Hence we can apply Lemma~\ref{lem:firstorderdegenerate}~\eqref{it:infinitesimaltime}. We obtain 
\begin{align*}
\sum_{\ell: \lambda_\ell(t)\in [\lambda-\frac{\Delta}{2},\lambda+\frac{\Delta}{2}]}\!\!\!\!\!\!\!\!\! g(\photonnumber(\lambda_\ell(t)))&=g(\photonnumber(\lambda))\cdot |\{\ell\ |\lambda_\ell=\lambda\}|
+\frac{t\beta(\lambda)}{4}\sum_{\ell:\lambda_\ell=\lambda}
\tr [S_{AB}(I_A\oplus 0_B)S_{AB}^T]^{(\ell)}+O(t^2)\ .
\end{align*}
for any $\lambda\in\sspec(M_{AB})$. Taking the derivative at~$t=0$ and inserting into~\eqref{eq:Srhoabexpl}
therefore gives
\begin{align*}
\frac{d}{dt}\Big|_{t=0}S\left(\rho_{AB}(t)\right) &
=\frac{1}{4}\sum_{\lambda\in\sspec(M_{AB})}\!\!\!\!\!\!\!\!\! \beta(\lambda) \sum_{\ell:\lambda_\ell=\lambda}
\tr [S_{AB}(I_A\oplus 0_B)S_{AB}^T]^{(\ell)}\ ,
\end{align*}
which is the claim because of~\eqref{eq:evolvedsabt}.\end{proof}

\section{Diffusion, translations and Fisher information\label{sec:conditionalfisher}}
A key element in the proof of the classical entropy power inequality is de Bruijin's identity; it relates the infinitesimal rate of entropy increase to the Fisher information of a family of translated distributions. In~\cite{KoeGrae}, a quantum version of this statement in terms of the diffusion semigroup and phase space translations was given.  Here we derive a generalization of this statement for conditional entropies (but specialized to Gaussian states). Our proof   proceeds by direct calculation and does not involve any technical subtleties associated with formal computations involving infinite-dimensional systems. 

We begin by recalling the relevant definitions. Consider a one-parameter family $\{\rho^{(\theta)}\}_{\theta\in\mathbb{R}}$ of states depending smoothly on the parameter~$\theta$. 
The divergence-based Fisher information of this family (at $\theta_0\in\mathbb{R}$) is  defined as the quantity
\begin{align*}
J(\rho^{(\theta)};\theta)|_{\theta=\theta_0}=\frac{d^2}{d\theta^2} S(\rho^{(0)}\|\rho^{(\theta)})|_{\theta=\theta_0}\ ,
\end{align*}
where $S(\rho\|\sigma)=\tr(\rho\log\rho-\rho\log\sigma)$ is the relative entropy or divergence.  Two straightforward but important consequences of this definition 
are the reparametrization identities
\begin{align}
J(\rho^{(c\theta)};\theta)|_{\theta=0}=c^2J(\rho^{(\theta)};\theta)|_{\theta=0}\qquad\textrm{ and }\qquad J(\rho^{(\theta+c)};\theta)|_{\theta=\theta_0}=
J(\rho^{(\theta+c)};\theta)|_{\theta=\theta_0+c}\label{eq:reparametrizationidentity}
\end{align}
for $c\in\mathbb{R}$ and its additivity
\begin{align}
J(\rho_A^{(\theta)}\otimes\rho_B^{(\theta)};\theta)|_{\theta=\theta_0}=
J(\rho_A^{(\theta)};\theta)|_{\theta=\theta_0}+
J(\rho_B^{(\theta)};\theta)|_{\theta=\theta_0}\ .\label{eq:fisherinformationadditivity}
\end{align}
The latter follows from the additivity of the relative entropy under tensor products. Furthermore, because of the monotonicity~$S(\cE(\rho)\|\cE(\sigma))\leq S(\rho\|\sigma)$ of the relative entropy under CPTP maps~$\cE$ the Fisher information also satisfies monotonicity (see~\cite{KoeGrae}), i.e.,  
\begin{align}
J(\cE(\rho^{(\theta)});\theta)|_{\theta=0}\leq 
J(\rho^{(\theta)};\theta)|_{\theta=0}\ .\label{eq:fisherinformationmotonicity}
\end{align}

 de Bruijin's identity involves the family~$\{\rho^{(\theta,R_k)}=\cW_{\theta\ket{k}}(\rho)\}_{\theta\in\mathbb{R}}$ of states obtained by translating an $N$-mode state~$\rho$ in the direction~$R_k$ in phase space, where $k\in\{1,\ldots,2N\}$. Let us write
\begin{align}
J(\rho):=\sum_{k=1}^{2N} J(\rho^{(\theta,R_{k})};\theta)|_{\theta=0}\ \label{eq:Jrhouncond}
\end{align}
for the sum  of the corresponding Fisher informations. de Bruijin's identity (shown in~\cite{KoeGrae}) relates this to the rate of entropy increase under diffusion, i.e.,
\begin{align}
\frac{d}{dt}\Big|_{t=0}S(e^{t\cL}(\rho))=\frac{1}{4}J(\rho)\ .\label{eq:debruijinorig}
\end{align}
In this section, we derive a version of~\eqref{eq:debruijinorig} for Gaussian states which involves an auxiliary system: it quantifies the rate of increase in the conditional entropy~$S(A|B)$ when~$A$ undergoes diffusion. In contrast to~\cite{KoeGrae}, the proof given here proceeds by direct computation. In Section~\ref{sec:fisherinfotranslated}, we compute the Fisher information of a family of states obtained by translating a Gaussian in phase space.  In Section~\ref{sec:debruijin}, we combine this with Lemma~\ref{lem:entropyincreasediff} to prove the de Bruijin identity. 

\subsection{Conditional Fisher information of translated  Gaussian states\label{sec:fisherinfotranslated} }

Let $\rho_{M,\vec{d}}$ denote an $N$-mode  Gaussian state with covariance matrix $M$ and displacement~$\vec{d}\in\mathbb{R}^{2N}$. Suppose $S$ is a symplectic matrix such that $SMS^T=D(\vec{\gamma})$ is diagonal. Let $\vec{\theta}\in \mathbb{R}^{2N}$ be arbitrary.
We will need the following formula for the relative entropy of $\rho_{M,\vec{d}}$ and a displaced state $\rho_{M,\vec{d}+\vec{\theta}}$: we have
\begin{align}
S(\rho_{M,\vec{d}}\|\rho_{M,\vec{d}+\vec{\theta}})=F(\vec{\gamma})+\frac{1}{2}\sum_{j=1}^N \beta(\gamma_j)\left( (S\theta)^2_{2j-1}+(S\theta)^2_{2j}\right)\ .\label{eq:relativeentropydisplaced}
\end{align}
where $F$ is a function of the symplectic eigenvalues only. 
\begin{proof}
By the  invariance $S(U\rho U^\dagger\| U\sigma U^\dagger)=S(\rho\|\sigma)$ of the relative entropy under unitaries, and applying displacement operators as well as the unitary $U_S$, we have
\begin{align*}
S(\rho_{M,\vec{d}}\|\rho_{M,\vec{d}+\vec{\theta}})=S(\rho_{M,0}\|\rho_{M,\vec{\theta}})=S(\rho_{M,-\vec{\theta}}\|\rho_{M,0})=S(\rho_{D(\gamma),-S\vec{\theta}}\|\rho_{D(\gamma),0})\ .
\end{align*}
It hence suffices to analyze  $S(\rho_{D(\gamma),\vec{\eta}}\|\rho_{D(\gamma),0})$, where $\vec{\eta}=-S\vec{\theta}\in\mathbb{R}^{2N}$.
Because
\begin{align*}
\rho_{D(\gamma),\vec{\eta}}=\rho_{D_1(\gamma_1),(\eta_1,\eta_2)}\otimes\cdots\otimes \rho_{D_1(\gamma_N),(\eta_{2N-1},\eta_{2N})} 
\end{align*}
is a product state, we obtain (using the additivity of the relative entropy for product states) 
\begin{align*}
S(\rho_{M,d}\|\rho_{M,\vec{d}+\vec{\theta}})=S(\rho_{D(\gamma),\vec{\eta}}\|\rho_{D(\gamma),0})=\sum_{j=1}^N S(\rho_{D_1(\gamma_j), (\eta_{2j-1},\eta_{2j})}\|\rho_{D_1(\gamma_j),0})\ .
\end{align*}
The claim therefore follows from Lemma~\ref{lem:singlemoderelative}.
\end{proof}
\begin{lemma}\label{lem:singlemoderelative}
Let $\vec{\eta}=(\eta_Q,\eta_P)\in\mathbb{R}^2, \gamma\in\mathbb{R}$, and let
$D=\diag(\gamma,\gamma)$ be the covariance matrix of a single mode Gaussian state. Then
\begin{align}
S(\rho_{D, \vec{\eta}}\|\rho_{D,0})=\frac{\beta}{2} (\photonnumber+\eta_Q^2+\eta_P^2)-g(\photonnumber)-\log(1-e^{-\beta})\ , \label{eq:relativeentropyrhorhogamma}
\end{align}
where the mean photon number $\photonnumber$ and the inverse temperature $\beta$ are given by~\eqref{eq:meannumbersymplectic} and~\eqref{eq:inversetemperaturedef}, respectively.
\end{lemma}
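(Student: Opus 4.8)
The argument preceding the statement has already reduced the general case to a single mode, so it remains to establish Lemma~\ref{lem:singlemoderelative}, which I would do by a short direct computation based on expanding the relative entropy as $S(\rho_{D,\vec\eta}\|\rho_{D,0})=-S(\rho_{D,\vec\eta})-\tr(\rho_{D,\vec\eta}\log\rho_{D,0})$ and exploiting that $\log\rho_{D,0}$ is affine in the number operator.

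First I would identify $\rho_{D,0}$ with $D=\diag(\gamma,\gamma)$ as the thermal state $e^{-\beta\hat{n}}/Z$, where $\hat{n}=\tfrac12(Q^2+P^2)-\tfrac12$ as in~\eqref{eq:numberoperatorsev}, the inverse temperature $\beta=\beta(\gamma)$ is given by~\eqref{eq:inversetemperaturedef}, and $Z=\tr e^{-\beta\hat{n}}=(1-e^{-\beta})^{-1}$; this is the correct identification because $1/(e^{\beta(\gamma)}-1)=(\gamma-1)/2=\photonnumber$, matching~\eqref{eq:meannumbersymplectic}. Then $\log\rho_{D,0}=-\beta\hat{n}+\log(1-e^{-\beta})$, and hence
\begin{align*}
S(\rho_{D,\vec\eta}\|\rho_{D,0})=-S(\rho_{D,\vec\eta})+\beta\,\tr(\rho_{D,\vec\eta}\hat{n})-\log(1-e^{-\beta})\ .
\end{align*}
The entropy term is immediate: $\rho_{D,\vec\eta}=W(\vec\eta)\rho_{D,0}W(\vec\eta)^\dagger$ is unitarily equivalent to $\rho_{D,0}$, so $S(\rho_{D,\vec\eta})=S(\rho_{D,0})=g(\photonnumber)$ by~\eqref{eq:entropydefinition}. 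For the energy term I would write $\tr(\rho_{D,\vec\eta}\hat{n})=\tr\big(\rho_{D,0}\,W(\vec\eta)^\dagger\hat{n}\,W(\vec\eta)\big)$ and apply the Heisenberg action~\eqref{eq:heisenbergactiondisplacement}, i.e.\ $W(\vec\eta)^\dagger Q W(\vec\eta)=Q+\eta_Q$ and $W(\vec\eta)^\dagger P W(\vec\eta)=P+\eta_P$, which gives $W(\vec\eta)^\dagger\hat{n}\,W(\vec\eta)=\hat{n}+\eta_Q Q+\eta_P P+\tfrac12(\eta_Q^2+\eta_P^2)$; since $\rho_{D,0}$ is centered, the terms linear in $Q,P$ have vanishing expectation, so $\tr(\rho_{D,\vec\eta}\hat{n})=\photonnumber+\tfrac12(\eta_Q^2+\eta_P^2)$. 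Substituting both evaluations back and collecting terms yields $S(\rho_{D,\vec\eta}\|\rho_{D,0})=\tfrac\beta2(\eta_Q^2+\eta_P^2)$ plus a remainder depending on the symplectic eigenvalue $\gamma$ alone, which is the assertion~\eqref{eq:relativeentropyrhorhogamma} (the remainder may be simplified further using $e^{-\beta(\gamma)}=(\gamma-1)/(\gamma+1)=\photonnumber/(\photonnumber+1)$, and in fact it vanishes, consistent with $S(\rho_{D,0}\|\rho_{D,0})=0$).

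There is no real obstacle; the calculation is short. The one step that requires care is the energy term: because the displacement acts on the \emph{quadratic} number operator, it produces not only the linear shifts $\eta_QQ+\eta_PP$ (which drop out by centering) but also the c-number $\tfrac12(\eta_Q^2+\eta_P^2)$, and this is the sole source of the $\eta_Q^2+\eta_P^2$ dependence of the relative entropy; one must make sure its coefficient comes out exactly as $\beta/2$, since this constant is precisely what propagates into the conditional Fisher information of the parent formula~\eqref{eq:relativeentropydisplaced}, and thence into the de Bruijin identity and the conditional entropy power inequality. A convenient consistency check on the remaining bookkeeping (the partition function $Z$ and the entropy $g(\photonnumber)$) is that the $\vec\eta$-independent part must vanish at $\vec\eta=0$.
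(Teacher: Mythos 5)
Your proof is correct and essentially identical to the paper's: both expand $S(\rho_{D,\vec{\eta}}\|\rho_{D,0})=-S(\rho_{D,\vec{\eta}})-\tr(\rho_{D,\vec{\eta}}\log\rho_{D,0})$, use unitary invariance of the entropy under the displacement, and evaluate the energy term via the Heisenberg action~\eqref{eq:heisenbergactiondisplacement}, with the linear terms dropping out because $\rho_{D,0}$ is centered. Your closing remark that the $\vec{\eta}$-independent remainder vanishes is also right and is a useful sanity check: the computation actually produces $\beta\photonnumber$ rather than $\beta\photonnumber/2$ in~\eqref{eq:relativeentropyrhorhogamma} (so the constant part cancels exactly), a harmless slip in the stated formula since only the coefficient $\beta/2$ of $\eta_Q^2+\eta_P^2$ propagates into the Fisher information and the subsequent results.
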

\begin{proof}
For brevity, let us write $\rho_D=\rho_{D,0}$ for the centered state. Then we have
\begin{align}
S(\rho_{D,\vec{\eta}}\|\rho_{D})=-S(\rho_D)-\tr(\rho_{D,\vec{\eta}}\log \rho_{D,0})=-g(\photonnumber(\gamma))-\tr(\rho_{D,\vec{\eta}}\log \rho_{D,0})\ .\label{eq:initialexprrelative}
\end{align}
To compute the latter term, we use the expression
$\rho_{D}=\frac{e^{-\beta \hat{n}}}{\tr(e^{-\beta\hat{n}})}=(1-e^{-\beta}) e^{-\beta\hat{n}}$, where $\hat{n}=a^\dagger a=\frac{1}{2}(Q^2+P^2-1)$ is the number operator and $\beta=\beta(\gamma)=\log (\gamma+1)/(\gamma-1)$ the inverse temperature. 
By taking the logarithm, one gets
\begin{align*}
-\tr(\rho_{D,\vec{\eta}}\log \rho_{D})=-\log(1-e^{-\beta})+\beta\tr(\rho_{D,\vec{\eta}}\hat{n})\ .
\end{align*}
Using the fact that $\rho_{D,\vec{\eta}}=W(\vec{\eta})\rho_{D}W(\vec{\eta})^\dagger$ for the Weyl operator $W(\vec{\eta})$ and the fact that 
\begin{align*}
W(\vec{\eta})^\dagger \hat{n} W(\vec{\eta})=\frac{1}{2}((Q+\eta_Q)^2+(P+\eta_P)^2-1)
\end{align*} according to~\eqref{eq:heisenbergactiondisplacement} and Definition~\eqref{eq:numberoperatorsev}, we get
\begin{align*}
-\tr(\rho_{D,\vec{\eta}}\log \rho_{D})&=-\log(1-e^{-\beta})+\frac{\beta}{2}\tr(\rho_{D}((Q+\eta_Q)^2+(P+\eta_P)^2-1/2))\\
&=-\log(1-e^{-\beta})+\frac{\beta}{2}\left(\tr(\rho_D\hat{n})+\eta_Q^2+\eta_P^2\right)\ .
\end{align*}
In the last line, we made use of the fact that $\rho_D$ is centered. The claim follows by combining this with~\eqref{eq:initialexprrelative}. 
\end{proof}
With~\eqref{eq:relativeentropydisplaced}, we can easily compute
the Fisher information of a family of displaced states.
\begin{lemma}[Fisher information of displaced states]\label{lem:fisherdisplaced}
Let~$M_{AB}$ be the covariance matrix of a centered state $\rho_{M_,0}$ of $m+n$ modes, where $S_{AB}M_{AB}S^T_{AB}=D_{m+n}(\vec{\gamma})$.  Fix some $k\in\{1,\ldots,2m\}$ and consider the family of states~$\{\rho^{(\theta,R_{k})}\}_{\theta\in\mathbb{R}}$, 
\begin{align}
\rho^{(\theta,R_k)}=\rho_{M_{AB},\theta \ket{k}}
=\cW_{\theta\ket{k}}(\rho_{M_{AB},0})
\end{align}
 obtained by displacing the state~$\rho_{M,0}$ in the direction $R_{k}$ by an amount~$\theta\in\mathbb{R}$.  Then
\begin{align*}
J(\rho^{(\theta,R_{k})};\theta)|_{\theta=0}=\sum_{j=1}^{m+n}\beta(\gamma_j)  (S_{2j-1,k}^2+S_{2j,k}^2)\ .
\end{align*}
\end{lemma}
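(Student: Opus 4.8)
The plan is to reduce everything to the relative-entropy formula~\eqref{eq:relativeentropydisplaced} and then differentiate twice. First I would identify the displaced family explicitly: by~\eqref{eq:displacementaction} we have $\rho^{(\theta,R_k)}=\cW_{\theta\ket{k}}(\rho_{M_{AB},0})=\rho_{M_{AB},\theta\ket{k}}$, so the pair $(\rho^{(0,R_k)},\rho^{(\theta,R_k)})$ is exactly of the form $(\rho_{M_{AB},\vec d},\rho_{M_{AB},\vec d+\vec\theta})$ with $\vec d=0$ and $\vec\theta=\theta\ket{k}$. Applying~\eqref{eq:relativeentropydisplaced} with $S=S_{AB}$, $M=M_{AB}$ and $N=m+n$ then gives
\begin{align*}
S(\rho^{(0,R_k)}\|\rho^{(\theta,R_k)})=F(\vec\gamma)+\frac12\sum_{j=1}^{m+n}\beta(\gamma_j)\left((S_{AB}\,\theta\ket{k})^2_{2j-1}+(S_{AB}\,\theta\ket{k})^2_{2j}\right)\ ,
\end{align*}
where $F$ depends only on the symplectic eigenvalues $\vec\gamma$ of $M_{AB}$.

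The remaining steps are routine. By linearity $(S_{AB}\,\theta\ket{k})_i=\theta\,(S_{AB})_{i,k}$, so the right-hand side is the quadratic polynomial $F(\vec\gamma)+\tfrac{\theta^2}{2}\sum_{j=1}^{m+n}\beta(\gamma_j)\big((S_{AB})_{2j-1,k}^2+(S_{AB})_{2j,k}^2\big)$ in $\theta$, with constant term $F(\vec\gamma)$ and vanishing linear term. Taking $\tfrac{d^2}{d\theta^2}$ at $\theta=0$, as prescribed by the definition of the divergence-based Fisher information $J(\rho^{(\theta,R_k)};\theta)|_{\theta=0}$, the constant $F(\vec\gamma)$ drops out and one is left with $\sum_{j=1}^{m+n}\beta(\gamma_j)\big((S_{AB})_{2j-1,k}^2+(S_{AB})_{2j,k}^2\big)$, which is the asserted formula.

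There is essentially no obstacle at this point: all the analytic content --- in particular the fact that the $\vec\theta$-dependence of $S(\rho_{M,\vec d}\|\rho_{M,\vec d+\vec\theta})$ is purely quadratic with coefficients $\tfrac12\beta(\gamma_j)$ --- has already been isolated in~\eqref{eq:relativeentropydisplaced}, and ultimately in Lemma~\ref{lem:singlemoderelative}. The only things to verify are the identification of the displaced state and the elementary substitution and differentiation. One may note in passing that the hypothesis $k\in\{1,\ldots,2m\}$ is not actually needed for this lemma in isolation --- \eqref{eq:relativeentropydisplaced} is valid for displacements in any of the $2(m+n)$ coordinate directions --- but it is imposed because the conditional de Bruijin identity in Section~\ref{sec:debruijin} only involves translations along the modes of the diffusing system~$A$.
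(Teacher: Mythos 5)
Your proposal is correct and follows exactly the paper's own argument: apply the displaced-state relative entropy formula~\eqref{eq:relativeentropydisplaced} with $\vec\theta=\theta\ket{k}$, observe the resulting expression is quadratic in $\theta$ with coefficient $\tfrac12\sum_j\beta(\gamma_j)(S_{2j-1,k}^2+S_{2j,k}^2)$, and take the second derivative at $\theta=0$. Your side remark that the restriction $k\in\{1,\ldots,2m\}$ is not needed for this lemma in isolation is also accurate.
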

\begin{proof}
With~\eqref{eq:relativeentropydisplaced}, we obtain
\begin{align*}
S(\rho_{M,0}\|\rho^{(\theta,R_{k})})=F(\gamma)+\frac{\theta^2}{2}
\sum_{j=1}^{m+n}\beta(\gamma_j)  (S_{2j-1,k}^2+S_{2j,k}^2)\ .
\end{align*}
The Fisher information is the second derivative of this quantity with respect to~$\theta$ at~$\theta=0,$ hence the claim follows.
\end{proof}
\subsection{The de Bruijin identity for conditional entropies of Gaussian states\label{sec:debruijin}}

It will be convenient to define the conditional Fisher information
\begin{align}
J(A|B)_{\rho_{AB}}:=\sum_{k=1}^{2m} J(\rho^{(\theta,R_{k})};\theta)|_{\theta=0}\ \label{eq:Jrhoabdef}
\end{align}
by summing over the modes corresponding to system $A$ only. 
Observe that many properties of the Fisher information carry over to this definition: for example, we have monotonicity
\begin{align}
J(\cE(A)|B)\leq J(A|B)\ ,\label{eq:monotonicityconditional}
\end{align}
for any CPTPM acting on~$\cE$, where these quantities are evaluated on the states $\rho_{AB}$ and $(\cE\otimes I_B)(\rho_{AB})$, respectively. 

By combining Lemma~\ref{lem:entropyincreasediff} and Lemma~\ref{lem:fisherdisplaced}, we obtain a proof of the following statement.
\begin{theorem}[de Bruijin identity for Gaussian states and conditional entropy]\label{thm:debruijingaussiancond}
Let $\rho_{AB}$ be a centered Gaussian state of $m+n$ modes.  Define~$\rho_{AB}(t)=(e^{t\cL_A}\otimes I_B)(\rho_{AB})$.  Then
\begin{align*}
\frac{d}{dt}\Big|_{t=0}S(A|B)_{\rho_{AB}(t)}=\frac{1}{4}J(A|B)_{\rho_{AB}}\ .
\end{align*}
\end{theorem}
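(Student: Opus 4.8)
The plan is to compute the two sides of the asserted identity in closed form using the two lemmas just established, and then to observe that the resulting expressions agree; this reduces the theorem to a short algebraic manipulation involving the symplectic matrix that diagonalizes $M_{AB}$. First I would record that, since $\rho_{AB}$ is centered by hypothesis, Lemma~\ref{lem:fisherdisplaced} applies directly to it, and I would fix once and for all a symplectic matrix $S_{AB}$ with $S_{AB}M_{AB}S_{AB}^T=D_{m+n}(\vec{\lambda})$, where $\vec{\lambda}$ is the symplectic spectrum of $M_{AB}$. Since the left-hand sides of both Lemma~\ref{lem:entropyincreasediff} and Lemma~\ref{lem:fisherdisplaced} are intrinsic quantities, their right-hand sides cannot depend on the particular diagonalizing $S_{AB}$, so the same $S_{AB}$ may be used in both.

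Next I would invoke Lemma~\ref{lem:entropyincreasediff}, which gives
\begin{align*}
\frac{d}{dt}\Big|_{t=0}S(A|B)_{\rho_{AB}(t)}=\frac{1}{4}\sum_{\ell=1}^{m+n}\beta(\lambda_\ell)\,\tr\!\left([S_{AB}(I_A\oplus 0_B)S_{AB}^T]^{(\ell)}\right)\ ,
\end{align*}
and then rewrite the matrix-element sum. Writing $I_A\oplus 0_B=\sum_{k=1}^{2m}\ket{k}\bra{k}$ for the projection onto the $2m$ quadratures of $A$, one has $Z:=S_{AB}(I_A\oplus 0_B)S_{AB}^T=\sum_{k=1}^{2m}(S_{AB}\ket{k})(S_{AB}\ket{k})^T$, whose diagonal entries are $Z_{r,r}=\sum_{k=1}^{2m}(S_{AB})_{r,k}^2$. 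Hence $\tr[Z]^{(\ell)}=Z_{2\ell-1,2\ell-1}+Z_{2\ell,2\ell}=\sum_{k=1}^{2m}\bigl((S_{AB})_{2\ell-1,k}^2+(S_{AB})_{2\ell,k}^2\bigr)$.

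Substituting this into the expression above and interchanging the two finite sums, I would obtain
\begin{align*}
\frac{d}{dt}\Big|_{t=0}S(A|B)_{\rho_{AB}(t)}=\frac{1}{4}\sum_{k=1}^{2m}\ \sum_{\ell=1}^{m+n}\beta(\lambda_\ell)\bigl((S_{AB})_{2\ell-1,k}^2+(S_{AB})_{2\ell,k}^2\bigr)\ .
\end{align*}
By Lemma~\ref{lem:fisherdisplaced} (applied with $\vec{\gamma}=\vec{\lambda}$), the inner sum over $\ell$ equals $J(\rho^{(\theta,R_k)};\theta)|_{\theta=0}$, so the right-hand side is $\tfrac14\sum_{k=1}^{2m}J(\rho^{(\theta,R_k)};\theta)|_{\theta=0}=\tfrac14 J(A|B)_{\rho_{AB}}$ by the definition~\eqref{eq:Jrhoabdef} of the conditional Fisher information, which is the claim.

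I do not expect a genuine obstacle here: the analytic content — the degenerate perturbation theory behind Lemma~\ref{lem:entropyincreasediff} and the explicit relative-entropy computation behind Lemma~\ref{lem:fisherdisplaced} — has already been carried out, and what remains is bookkeeping. The two points that need a little care are matching the $2\times2$-block indexing convention for $[Z]^{(\ell)}$ with the row-index convention $(S_{AB})_{2j-1,k},(S_{AB})_{2j,k}$ appearing in Lemma~\ref{lem:fisherdisplaced}, and justifying that a single diagonalizing symplectic matrix may be shared between the two lemmas; both are immediate once one observes that each of the two formulas computes a quantity that is independent of that choice.
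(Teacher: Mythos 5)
Your proposal is correct and is essentially the paper's own proof: it applies Lemma~\ref{lem:entropyincreasediff}, identifies the diagonal entries of $S_{AB}(I_A\oplus 0_B)S_{AB}^T$ as $\sum_{k=1}^{2m}(S_{AB})_{r,k}^2$, interchanges the finite sums, and matches the inner sum with Lemma~\ref{lem:fisherdisplaced} and the definition~\eqref{eq:Jrhoabdef}. The only cosmetic difference is that the paper phrases the computation for the unconditional entropy and invokes the time-independence of $\rho_B$ at the end, whereas you use Lemma~\ref{lem:entropyincreasediff} in its conditional form directly; the content is identical.
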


\begin{proof}
It is straightforward to check that~$S(I_A\oplus 0_B) S^T$ has
diagonal elements of the form $(S(I_A\oplus 0_B) S^T)_{\ell,\ell}=\sum_{k=1}^{2m} S_{\ell,k}^2$.  Hence
\begin{align*}
\frac{d}{dt}\Big|_{t=0}S\left((e^{t\cL_A}\otimes I_B)(\rho_{AB})\right)&=\frac{1}{4}
\sum_{j=1}^{m+n}\beta(\gamma_j)\left( (S(I_A\oplus 0_B) S^T)_{2j-1,2j-1}+(S(I_A\oplus 0_B) S^T)_{2j,2j}\right)\\
&=\frac{1}{4}
\sum_{j=1}^{m+n}\sum_{k=1}^{2m}\beta(\gamma_j)\left( S^2_{2j-1,k}+S^2_{2j,k}\right)\ 
\end{align*}
according to Lemma~\ref{lem:entropyincreasediff}. We conclude from Lemma~\ref{lem:fisherdisplaced} and  Definition~\eqref{eq:Jrhoabdef} that
\begin{align*}
\frac{d}{dt}\Big|_{t=0}S\left((e^{t\cL_A}\otimes I_B)(\rho_{AB})\right)=\frac{1}{4} J(A|B)_{\rho_{AB}}
\end{align*}
The claim then follows because~$\rho_{B}(t)=\rho_B$ does not depend on time (cf.~\eqref{eq:evolvedsabt}). 
\end{proof}

\section{The entropy power inequality for conditional entropy\label{sec:conditionalentropypower}}
Having established the de Bruijin identity for conditional entropies as well as the asymptotic scaling of the conditional entropies under diffusion, it is straightforward to prove the entropy power inequality for conditional entropies. Indeed, this follows the pattern of known classical proofs~\cite{Demboetal91}, with minor modifications because  we are considering conditional entropies. It relies heavily on the Fisher information inequality (a consequence of data processing, as shown by Zamir~\cite{Zamir98}). 

We introduce the necessary definitions in Section~\ref{sec:beamsplittersproductstates}. The conditional Fisher information inequality and the conditional entropy power inequality are derived subsequently in Sections~\ref{sec:fisherinfo} and~\ref{sec:condtionalentropypowerga}.

\subsection{Beam splitters, product states and auxiliary  systems\label{sec:beamsplittersproductstates}}
Consider two systems $X_j$,  $j=1,2$ with $N$~modes each and associated mode operators~$\{Q_k^{(j)},P_k^{(j)}\}_{k=1}^N$.
A beam-splitter with transmissivity~$\lambda\in [0,1]$ acting on $X_1X_2$ is the Gaussian unitary $U_{S_\lambda}$ described by the symplectic 
matrix
\begin{align}
S_{\lambda} &=
\left(
\begin{matrix}
\sqrt{\lambda} I_{2N} & \sqrt{1-\lambda} I_{2N}\\
\sqrt{1-\lambda} I_{2N} & -\sqrt{\lambda} I_{2N}\ 
\end{matrix}
\right)\label{eq:Slambdadef}
\end{align}
with respect to the ordering $(Q_1^{(1)},P_1^{(1)},\ldots,Q_N^{(1)},P_N^{(1)}, Q_1^{(2)},P_1^{(2)},\ldots,Q_N^{(2)},P_N^{(2)})$ of modes. We are interested in the beam-splitter map
$\cE_\lambda=\cE_\lambda^{X_1X_2\mapsto Y}$, which 
is obtained by letting~$X_1,X_2$ interact according to $U_{S_\lambda}$, and discarding the second set of $N$ modes.  That is, it is a map from $2N$~input modes to 
$N$~output modes; we call the latter~$Y$. Formally, the map~$\cE_\lambda$ is defined as
\begin{align*}
\cE_{\lambda}(\rho_{X_1X_1})&=\tr_{X_2}\left(U_{\lambda}\rho_{X_1X_2}U_\lambda^\dagger\right)\ ,
\end{align*}
where $\tr_{X_2}$ denotes the partial trace of over the second set~$X_2$ of modes.  We will denote the output system (i.e., the set of modes~$X_1$ at the end of this process) by~$Y$, i.e., think of $\cE_\lambda=\cE_\lambda^{X_1X_2\mapsto Y}$ as a map from systems~$X_1X_2$ to an output system~$Y$ (of~$N$ modes). Since the partial trace~$\tr_{X_2}$ is a Gaussian map, the map~$\cE_\lambda$ is Gaussian and completely determined by its action on  covariance matrices and displacement vectors. This is
\begin{align*}
\begin{matrix}
\rho_{X_1X_2}&\mapsto \cE_\lambda(\rho_{X_1X_2})\\
\left(\left(\begin{matrix}
M_{X_1} &L_{X_1X_2}\\
L_{X_1X_2}^T & M_{X_2}
\end{matrix}\right),(\vec{d}_{X_1},\vec{d}_{X_2})
\right)&\overset{\cE_\lambda}{\mapsto}
(M_{Y},\vec{d}_Y)
\end{matrix}\end{align*}
where
\begin{align}
\begin{matrix}
M_Y&=&\lambda M_{X_1}+(1-\lambda) M_{X_2}+\sqrt{\lambda(1-\lambda)}(L_{X_1X_2}+L_{X_1X_2}^T)\\
\vec{d}_Y&=&\sqrt{\lambda }\vec{d}_{X_1}+\sqrt{1-\lambda} \vec{d}_{X_2}\  .
\end{matrix}\label{eq:transformationelambdacov}
\end{align}
This follows immediately from~\eqref{eq:Slambdadef}. 

\begin{center}
\begin{figure}
\begin{center}
\includegraphics{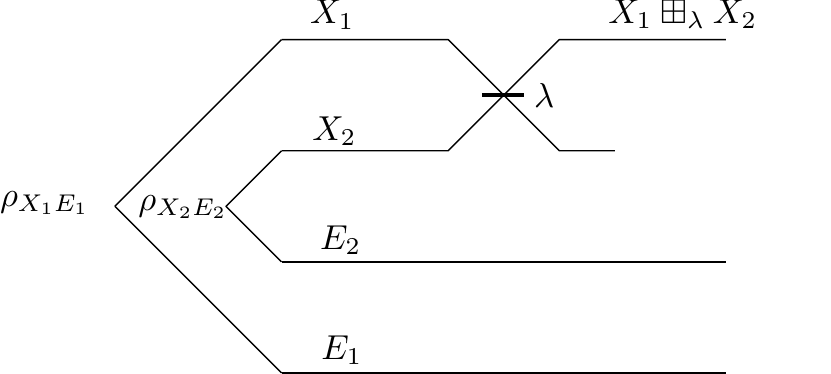}
\end{center}
\caption{\label{fig:beamsplitterdef}This quantum circuit circuit defines the states~$\rho_{X_1\boxplus_\lambda X_2}$ and  $\rho_{(X_1\boxplus_\lambda X_2)E_1E_2}$
 for any product input
$\rho_{X_1E_1}\otimes\rho_{X_2E_2}$.}
\end{figure}
\end{center}
We will consider input states that are products (across the bipartition $X_1:X_2$). It will be convenient to introduce the following maps and states as summarized in Figure~\ref{fig:beamsplitterdef}:
\begin{description}
\item[the state $\rho_{X_1\boxplus_\lambda X_2}$, given $\rho_{X_1}, \rho_{X_2}$: ] For product inputs $\rho_{X_1}\otimes\rho_{X_2}$, the transformation~\eqref{eq:transformationelambdacov}
specializes to 
\begin{align*}
\begin{matrix}
\rho_{X_1}\otimes\rho_{X_2} & \mapsto& \sigma_Y:= \cE^{X_1X_2\mapsto Y}_\lambda(\rho_{X_1}\otimes\rho_{X_2})\\
\left(\left(\begin{matrix}
M_{X_1} &0 \\
0 & M_{X_2}
\end{matrix}\right),(d_{X_1},d_{X_2})
\right) &\overset{\cE_\lambda}{\mapsto}&
(\lambda M_{X_1}+(1-\lambda) M_{X_2},\sqrt{\lambda}\vec{d}_{X_1}+\sqrt{1-\lambda}\vec{d}_{X_2})
\end{matrix}\ ,
\end{align*}
where we assume that $\rho_{X_j}$ is a Gaussian state described by~$(M_{X_j},\vec{d}_{X_j})$, $j=1,2$. We will denote the output state~$\sigma_Y$ obtained in this fashion by $\rho_{X_1\boxplus_\lambda X_2}$.

\item[the state $\rho_{(X_1\boxplus_\lambda X_2)E_1E_2}$ given $\rho_{X_1E_1}$ and $\rho_{X_2E_2}$:]
More generally, consider the map
$I_{E_1E_2}\otimes \cE_\lambda^{X_1X_2\mapsto Y}$, where  $E_j$, $j=1,2$ are auxiliary 
 systems with~$L$ modes each. We will only need a description of  its action on product states $\rho_{X_1E_1}\otimes\rho_{X_2E_2}$, where $\rho_{X_jE_j}$ are Gaussian states with covariance matrix and displacement vectors
\begin{align}
M_{X_jE_j}&=
\left(
\begin{matrix}
M_{X_j} & L_{X_jE_j}\\
L_{X_jE_j}^T & M_{E_j}
\end{matrix}
\right)\ \qquad \vec{d}_j=(\vec{d}_{X_j},\vec{d}_{E_j})\qquad\textrm{ for }j=1,2\ .\label{eq:twostatecovariancematrices}
\end{align}
It is straightforward to verify that this is given by
\begin{align*}
\rho_{X_1E_1}\otimes\rho_{X_2E_2}&\mapsto \sigma_{YE_1E_2}:=(I_{E_1E_2}\otimes\cE^{X_1X_2\mapsto Y}_\lambda) (\rho_{X_1E_1}\otimes\rho_{X_2E_2})\\
(M_{X_1E_1}\oplus M_{X_2E_2},(\vec{d}_1,\vec{d}_2))&\mapsto (M_{YE_1E_2},(\vec{d}_Y,\vec{d}_{E_1},\vec{d}_{E_2}))
\end{align*}
where
\begin{align}
\begin{matrix}
M_{YE_1E_2}&=&\left(
\begin{matrix}
\lambda M_{X_1}+(1-\lambda) M_{X_2} & \sqrt{\lambda} L_{X_1E_1} & \sqrt{1-\lambda}L_{X_2E_2}\\
 \sqrt{\lambda} L_{X_1E_1}^T & M_{E_1} & 0\\
\sqrt{1-\lambda}L_{X_2E_2}^T & 0 & M_{E_2}
\end{matrix}
\right)\ \\
 \vec{d}_Y&=&\sqrt{\lambda} \vec{d}_{X_1}+\sqrt{1-\lambda}\vec{d}_{X_2}\ .
\end{matrix}\label{eq:transformationrulejointmap}
\end{align}
\end{description}

\subsection{The conditional Fisher information inequality for beamsplitters\label{sec:fisherinfo}}
In~\cite[Lemmas 3.2 and 6.1]{KoeGrae}, it was shown that
the beam-splitter map is compatible with both diffusion and translations in the following sense. For all $t\geq 0$, ~$w_1,w_2\in\mathbb{R}$, $\vec{\theta}\in\mathbb{R}^{2m}$, we have the following identitities of  Gaussian maps:
\begin{align}
\cE_\lambda^{X_1X_2\mapsto Y}\circ (e^{t\cL_{X_1}}\otimes e^{t\cL_{X_2}})&=e^{t\cL_Y}\circ\cE_\lambda^{X_1X_2}\label{eq:diffusioncompatibility}\\
\cE_\lambda^{X_1X_2\mapsto Y}\circ (\cW^{X_1}_{w_1\vec{\theta}}\otimes 
\cW^{X_2}_{w_2\vec{\theta}})&=\cW^{Y}_{w\vec{\theta}}\circ \cE_\lambda^{X_1X_2\mapsto Y}\qquad\textrm{ where }w=\sqrt{\lambda}w_1+\sqrt{1-\lambda}w_2\ .\label{eq:translationcompatibility}
\end{align}
This was then used to show that the quantity~$J$ (cf.~\eqref{eq:Jrhouncond}) satisfies the
Fisher information inequality
\begin{align}
J(X_1\boxplus_\lambda X_2)\leq \lambda J(X_1)+(1-\lambda)J(X_2)\ .\label{eq:fisherinformationinequality}
\end{align}
The proof of~\eqref{eq:fisherinformationinequality} follows immediately from the monotonicity~\eqref{eq:fisherinformationmotonicity}  of the divergence-based Fisher information, its additivity~\eqref{eq:fisherinformationadditivity},  the reparametrization identity~\eqref{eq:reparametrizationidentity}, as well as the compatibility properties~\eqref{eq:diffusioncompatibility},~\eqref{eq:translationcompatibility}. We will omit the corresponding argument here; it was discovered in the classical context by Zamir~\cite{Zamir98}.

Here we argue briefly that the quantity~\eqref{eq:Jrhoabdef} satisfies an analogous inequality, that is,
\begin{align}
J(X_1\boxplus_\lambda X_2|E_1E_2)\leq \lambda J(X_1|E_1)+(1-\lambda)J(X_2|E_2)\ .\label{eq:fisherinformationinequality} 
\end{align}
Indeed, it is clear that the identities~\eqref{eq:diffusioncompatibility} and 
\eqref{eq:translationcompatibility} still hold if we replace
the maps $e^{t\cL_{Z}}$, $\cE_\lambda^{X_1X_2\mapsto Y}$ and $\cW^{Z}_{\vec{\xi}}$
by their `stabilized' versions (obtained by adjoining an identity)
\begin{align*}
\begin{matrix}
e^{t\cL}\qquad & \mapsto &\qquad e^{t\cL}\otimes I_{E_1E_2}\\
\cE_\lambda^{X_1X_2\mapsto Y}\qquad & \mapsto &\qquad \cE_\lambda^{X_1X_2\mapsto Y}\otimes I_{E_1E_2}\\
\cW^{Z}_{\vec{\xi}} & \mapsto & \cW^{Z}_{\vec{\xi}}\otimes I_{E_1E_2}\ .
\end{matrix}
\end{align*}
Furthermore, 
the quantity $J(A|B)_{\rho_{AB}}$ is also motononous (cf.~\eqref{eq:monotonicityconditional}). Because
each of the terms~$J(\rho^{\theta,R_k};\theta)|_{\theta=0}$  constituting~$J(A|B)_{\rho_{AB}}$ is additive and satisfies the reparametrization identities, we can apply Zamir's proof again (carrying along~$E_1E_2$) and obtain
the conditional Fisher information inequality~\eqref{eq:fisherinformationinequality}.

\subsection{The conditional entropy power inequality for Gaussian states\label{sec:condtionalentropypowerga}}
The entropy power inequality we prove relates the 
conditional entropy $S(Y|E_1E_2)_\sigma$ of the state
\begin{align*}
\sigma_{YE_1E_2}&=(\cE_\lambda\otimes I_{E_1E_2})(\rho_{X_1E_1}\otimes\rho_{X_2E_2}):=\rho_{(X_1\boxplus_\lambda X_2)E_1E_2}
\end{align*} 
to the conditional entropies $S(X_j|E_j)$ of the two (Gaussian) states $\rho_{X_jE_j}$, $j=1,2$. 

\begin{theorem}[Conditional entropy power inequality for Gaussian states]\label{thm:entropypowerinequalitygaussian}
Let $\rho_{X_1E_1}$ and $\rho_{X_2E_2}$ be arbitrary  Gaussian states.
Then 
\begin{align*}
S(X_1\boxplus_\lambda X_2|E_1E_2)\geq \lambda S(X_1|E_1)+(1-\lambda)S(X_2|E_2)\ .
\end{align*}
\end{theorem}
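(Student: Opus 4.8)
The strategy is to imitate the classical proof of the entropy power inequality via the de Bruijin identity and the Fisher information inequality, which for conditional entropies of Gaussian states is now available as Theorem~\ref{thm:debruijingaussiancond} together with the conditional Fisher information inequality~\eqref{eq:fisherinformationinequality}. The key device is the following interpolation: define, for $t \geq 0$,
\begin{align*}
u(t) = S(X_1\boxplus_\lambda X_2 \mid E_1E_2)_{\rho(t)} - \lambda S(X_1\mid E_1)_{\rho_1(t)} - (1-\lambda) S(X_2\mid E_2)_{\rho_2(t)}\ ,
\end{align*}
where $\rho_1(t) = (e^{t\cL_{X_1}}\otimes I_{E_1})(\rho_{X_1E_1})$, $\rho_2(t) = (e^{t\cL_{X_2}}\otimes I_{E_2})(\rho_{X_2E_2})$, and $\rho(t) = (e^{t\cL_Y}\otimes I_{E_1E_2})(\rho_{(X_1\boxplus_\lambda X_2)E_1E_2})$. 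By the diffusion compatibility identity~\eqref{eq:diffusioncompatibility} (in its stabilized form), $\rho(t)$ is exactly the state obtained by running the diffused inputs $\rho_1(t)$, $\rho_2(t)$ through the beamsplitter, so $u(t)$ compares like-for-like at every time. The goal is to show $u(0) \geq 0$.

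First I would establish $\lim_{t\to\infty} u(t) = 0$. By Lemma~\ref{lem:conditionalentropyinfinitescaling}, each of the three conditional entropies scales as $N\cdot g((t-1)/2)$ (with $N$ the relevant number of modes, which is the same for $Y$, $X_1$, $X_2$), so the leading divergent terms cancel with weights $\lambda + (1-\lambda) = 1$, and the residual differences vanish. (One must first reduce to the case where all symplectic eigenvalues of the conditioning systems differ from $1$, as in the proof of Lemma~\ref{lem:conditionalentropyinfinitescaling}; eigenmodes equal to $1$ factor out and do not contribute.) Next I would differentiate: by Theorem~\ref{thm:debruijingaussiancond} applied to each of the three states (using that diffusing $Y$ is the same as diffusing both $X_1$ and $X_2$, via~\eqref{eq:diffusioncompatibility}),
\begin{align*}
\frac{d}{dt} u(t) = \tfrac14\left( J(X_1\boxplus_\lambda X_2 \mid E_1E_2)_{\rho(t)} - \lambda J(X_1\mid E_1)_{\rho_1(t)} - (1-\lambda) J(X_2\mid E_2)_{\rho_2(t)}\right) \leq 0
\end{align*}
by the conditional Fisher information inequality~\eqref{eq:fisherinformationinequality}, valid because $\rho(t)$ is the beamsplitter output of $\rho_1(t)\otimes\rho_2(t)$. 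Hence $u$ is nonincreasing, and since $u(\infty) = 0$, we get $u(0) \geq 0$, which is the claim. (The semigroup property of diffusion ensures that differentiating at a general time $t$ reduces to the $t=0$ statement of Theorem~\ref{thm:debruijingaussiancond} applied to the state at time $t$.)

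The main obstacle I anticipate is not any single step but the interplay of two technical points: (i) justifying that $u(t)$ is genuinely differentiable with the stated derivative for all $t>0$ — this requires that the relevant states remain Gaussian with well-behaved symplectic spectra, that Theorem~\ref{thm:debruijingaussiancond} (proved at $t=0$) transports to arbitrary $t$ via the semigroup law, and that one can interchange the derivative with the (possibly infinite-dimensional) limiting procedures; and (ii) the uniformity in the infinite-time limit, i.e.\ that the error terms in Lemma~\ref{lem:conditionalentropyinfinitescaling} really do go to zero and that the three mode-numbers match so the $g((t-1)/2)$ terms cancel exactly. Both are manageable given the groundwork laid in the earlier sections — in particular the continuity estimate Lemma~\ref{lem:entropycontinuity} controls (ii), and the perturbative Lemma~\ref{lem:firstorderdegenerate} underpins the rate computation — but they are where the care is needed. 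A secondary point worth a sentence is monotonicity of $u$ only gives the inequality; equality analysis (Gaussian saturation) would require separately tracking when the Fisher information inequality is tight, which is not needed here.
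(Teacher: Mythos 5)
Your proposal is correct and follows essentially the same route as the paper: the paper's proof defines the same interpolating quantity $\delta(t)$, uses the stabilized diffusion-compatibility identity, shows $\delta(t)\to 0$ via Lemma~\ref{lem:conditionalentropyinfinitescaling}, and combines the de Bruijin identity (Theorem~\ref{thm:debruijingaussiancond}) with the conditional Fisher information inequality to get $\delta'(t)\leq 0$, hence $\delta(0)\geq 0$. The technical caveats you flag (differentiability at general $t$ and the matching of mode numbers in the infinite-time limit) are indeed the points the paper treats only implicitly.
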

Note that the proof outlined here combined with the discussion in~\cite{KoeGrae} (respectively Stam's proof~\cite{Stam59}) should also provide the inequality
\begin{align*}
e^{S(X_1\boxplus_\lambda X_2|E_1E_2)}\geq \frac{1}{2}e^{S(X_1|E_1)/n}+\frac{1}{2}e^{S(X_2|E_2)/n}
\end{align*}
where $X_1$ and $X_2$ have $n$~modes. We do not discuss this version here for brevity.

\begin{proof}
Let the covariance matrices and displacement vectors of
$\rho_{X_jE_j}$,  $j=1,2$, be given by~\eqref{eq:twostatecovariancematrices}.
The corresponding covariance matrices are
\begin{align*}
M_{X_j(t)E_j}&=
\left(
\begin{matrix}
M_{X_j(t)} & L_{X_jE_j}\\
L_{X_jE_j}^T & M_{E_j}
\end{matrix}
\right)\qquad\textrm{ where }M_{X_j(t)}=M_{X_j}+tI\ .
\end{align*}
For $t\geq 0$, define the function
\begin{align*}
\delta(t):=S(X_1(t)\boxplus_\lambda X_2(t)|E_1E_2)-\lambda S(X_1(t)|E_1)-(1-\lambda)S(X_2(t)|E_2)\ ,
\end{align*}
where the entropies are evaluated on the states~$\rho_{X_j(t)E_j}$, $j=1,2$ and
the result~$\rho_{(X_1(t)\boxplus_\lambda X_2(t))E_1E_2}$ of letting these interact with the beamsplitter (as discussed in Section~\ref{sec:beamsplittersproductstates}).
According to the `stabilized' version of~\eqref{eq:diffusioncompatibility}, we have
\begin{align*}
\delta(t)&=S((X_1\boxplus X_2)(t)|E_1E_2)-\lambda S(X_1(t)|E_1)-(1-\lambda)S(X_2(t)|E_2)\\
&=S(Y(t)|E_1E_2)-\lambda S(X_1(t)|E_1)-(1-\lambda)S(X_2(t)|E_2)\ .
\end{align*}
This shows that $\delta(t)$ is the difference of conditional entropies 
of time-evolved states for different initial states~$\rho_{YE_1E_2}$, $\rho_{X_1E_1}$ and $\rho_{X_2E_2}$ at $t=0$. We conclude with Lemma~\ref{lem:conditionalentropyinfinitescaling}
that
\begin{align}
\lim_{t\rightarrow\infty}\delta(t)=0\ .\label{eq:deltalimit}
\end{align}
On the other hand, we have according to the de Bruijin identity (Theorem~\ref{thm:debruijingaussiancond})
\begin{align*}
4\delta'(t)&=J(X_1(t)\boxplus_\lambda X_2(t)|E_1E_2)-\lambda J(X_1(t)|E_1)-(1-\lambda)J(X_2(t)|E_2)\ .
\end{align*}
This identity, together with Fisher information inequality~\eqref{eq:fisherinformationinequality}  imply that $\delta'(t)\leq 0 $ for all $t\geq 0$. With~\eqref{eq:deltalimit}, this shows that $\delta(0)\geq 0$, which is the claim. 
\end{proof}
\subsubsection*{Acknowledgements}
I would like to thank the organizers of the workshop `Beyond iid in quantum information theory'. I also thank Reinhard Werner, Graeme Smith and Jon Yard for discussions, and gratefully acknowledge support by NSERC.
\bibliographystyle{plain}
\bibliography{q}
\end{document}